\newcommand{\imineq}[2]{\vcenter{\hbox{\includegraphics[height=#2ex]{#1}}}}
\definecolor{pink}{RGB}{233, 0, 100}
\def \bea{\begin{eqnarray}}
\def \eea{\end{eqnarray}}
\def \be{\begin{equation}}
\def \ee{\end{equation}}
\def \nn {\nonumber}
\def \tr{\text{tr}}
\def \T{\mathcal{T}}
\def \mo{\mathcal{O}}
\def \e{\epsilon}
\newtheoremstyle{thry}
{6pt}
{6pt}
{\itshape}
{}
{\bfseries}
{:}
{.5em}
{}
\newtheorem{proposition}{Proposition}
\newtheorem{theorem}{Theorem}
\newtheorem{corollary}{Corollary}
\newtheoremstyle{remar}
{6pt}
{6pt}
{\upshape}
{}
{\bfseries}
{}
{.5em}
{}
\theoremstyle{remar}
\newtheorem*{remark}{Remark:}
\renewenvironment{proof}[1][\proofname]{\par
\pushQED{\qed}%
\normalfont \topsep6\p@\@plus6\p@\relax
\trivlist
\item\relax
{\itshape
#1\@addpunct{.}}\hspace\labelsep\ignorespaces
}{%
\popQED\endtrivlist\@endpefalse
}
\begin{document}
\begin{titlepage}
\vspace{0.5cm}
\begin{center}
{\Large \bf {Constructible reality condition of  pseudo entropy via pseudo-Hermiticity}}
\lineskip .75em
\vskip 2.5cm
{\large Wu-zhong Guo$^{a,}$\footnote{wuzhong@hust.edu.cn}, Song He$^{b,c,}$\footnote{hesong@jlu.edu.cn; corresponding author},  Yu-Xuan Zhang$^{b,}$\footnote{yuxuanz20@mails.jlu.edu.cn; corresponding author}}
\vskip 2.5em
{\normalsize\it $^{a}$School of Physics, Huazhong University of Science and Technology\\
Luoyu Road 1037, Wuhan, Hubei 430074, China\\
$^{b}$Center for Theoretical Physics and College of Physics, Jilin University,\\ Changchun 130012, People's Republic of China\\
$^{c}$Max Planck Institute for Gravitational Physics (Albert Einstein Institute),\\
Am M\"uhlenberg 1, 14476 Golm, Germany
}
\vskip 2.0em
\textcolor{blue}{\textit{\large Dedicated to Hermann Nicolai on the occasion of his $70^{th}$  birthday.}}
\end{center}
\begin{abstract}
As a generalization of entanglement entropy, pseudo entropy is not always real. The real-valued pseudo entropy has promising applications in holography and quantum phase transition. We apply the notion of pseudo-Hermiticity to formulate the reality condition of pseudo entropy. We find the general form of the transition matrix for which the eigenvalues of the reduced transition matrix possess real or complex pairs of eigenvalues. Further, we find a class of transition matrices for which the pseudo (R\'enyi) entropies are non-negative. Some known examples which give real pseudo entropy in quantum field theories can be explained in our framework. Our results offer a novel method to generate the transition matrix with real pseudo entropy. Finally, we show the reality condition for pseudo entropy is related to the Tomita-Takesaki modular theory for quantum field theory.
\end{abstract}
\end{titlepage}

\baselineskip=0.7cm

\tableofcontents
\section{Introduction}

Entanglement entropy (EE), as an entanglement measure, has been investigated in many aspects \cite{Vidal:2002rm,Kitaev:2005dm,Levin:2006zz,Srednicki:1993im,Eisert:2008ur,Calabrese:2004eu,Ryu:2006bv,Hubeny:2007xt}. Specially, in the context of AdS/CFT \cite{Maldacena:1997re,Gubser:1998bc,Witten:1998qj}, entanglement plays an important role in understanding the emergence of geometry \cite{VanRaamsdonk:2010pw,Rangamani:2016dms}, subregion/subregion duality \cite{Almheiri:2014lwa,Dong:2016eik} and information paradox of black hole \cite{Penington:2019npb,Almheiri:2019psf}.

Recently,  a generalization of EE, called \textit{pseudo entropy}, was introduced in \cite{Nakata:2020luh} via AdS/CFT and post-selection, which may bring us a new understanding of the role of entanglement in quantum field theory (QFT) or gravity.
Given a system whose Hilbert space can be divided as $H=H_{A}\otimes H_{\bar{A}}$, the pseudo entropy of subsystem $A$ is defined as
\bea
S(\mathcal{T}_A^{\psi|\phi})=-\tr[\mathcal{T}_A^{\psi|\phi}\log\mathcal{T}_A^{\psi|\phi}],\label{eq1pseudo}
\eea
where $\mathcal{T}_A^{\psi|\phi}=\tr_{\bar A}\left[\mathcal{T}^{\psi|\phi}\right]$, called \textit{reduced transition matrix}, is defined from the transition matrix consisting of two nonorthogonal pure states $|\psi\rangle$ and $|\phi\rangle$
\bea
\mathcal{T}^{\psi|\phi}=\frac{|\psi\rangle\langle\phi|}{\langle\phi|\psi\rangle}.\label{eq2:transitionMa}
\eea
Similar to the EE, in practice,  especially in QFTs, one uses the replica trick \cite{Calabrese:2004eu} and computes the so-called \textit{pseudo R\'enyi entropy} instead
\bea\label{pseudoRenyi}
S^{(n)}(\mathcal{T}_A^{\psi|\phi})=\frac{1}{1-n} \log \tr[ (\mathcal{T}_A^{\psi|\phi})^n].
\eea
The pseudo entropy \eqref{eq1pseudo} can be obtained by taking the limit of $n\to1$ for the above expression.

There are manifold interests driving the study of this quantity. See  \cite{Mollabashi:2020yie,Camilo:2021dtt,Mollabashi:2021xsd,Nishioka:2021cxe,Goto:2021kln,Miyaji:2021lcq,Akal:2021dqt,Mukherjee:2022jac,Guo:2022sfl,Ishiyama:2022odv,Bhattacharya:2022wlp,Doi:2022iyj} for recent studies. The first comes from holography. In QFTs, the pseudo R\'enyi entropy can be formulated in the language of the path integral. According to the AdS/CFT, the path integral in CFTs can be translated to the gravitational path integral in AdS. Based on this, it is proposed in \cite{Nakata:2020luh} that the pseudo entropy of a boundary subsystem is dual to the area of a minimal surface in Euclidean asymptotically
 AdS spacetime. It is one of the main motivations of \cite{Nakata:2020luh} to propose this novel quantity.   The second interest comes from the experiments of post-selection in quantum measurements \cite{PhysRev.134.B1410,Aharonov:1988xu,RevModPhys.86.307}.
It is argued and proved in a two-qubit model that the pseudo entropy characterizes the averaged number of  EPR pairs that could have been distilled in the pre-and post-selected systems \cite{Nakata:2020luh}. Since the
post-selection measurement can be incomplete \cite{aharonov1991complete,aharonov2008two}, motivated by this, in this article, we would also like to consider a generalization of the transition matrices \eqref{eq2:transitionMa} to mixed states. \textcolor{black}{Given two general density matrices $\rho_1$ and $\rho_2$ $(\tr[\rho_1\rho_2]\neq0)$ in Hilbert space, we construct the following operator as a mixed-state generalization of the transition matrix.}
\begin{align}
\textcolor{black}{X:=\frac{\rho_1\rho_2}{\tr[\rho_1\rho_2]}}.\label{defofX}
\end{align}
 \textcolor{black}{Eq.\eqref{defofX} can be understood as the system being pre-selected and post-selected to mixed states $\rho_1$ and $\rho_2$  instead of two pure states, respectively. Note that Eq.\eqref{defofX} is reduced to Eq.\eqref{eq2:transitionMa} when $\rho_{1,2}$ are pure.} The third interest in studying pseudo entropy  comes from quantum many-body systems.  The study of pseudo entropy in Ising spin chain \cite{Mollabashi:2020yie} indicates that pseudo entropy can be taken as a new order parameter in quantum many-body systems, just like EE \cite{Kitaev:2005dm,Levin:2006zz}.

All the above alluring physical interpretations and application of pseudo entropy are implicitly based on such a fact: The pseudo entropy $S(\mathcal{T}_A^{\psi|\phi})$ of a subsystem is real. However, $S(\mathcal{T}_A^{\psi|\phi})$ should be complex-valued for generic choices of the initial and finite states since the reduced transition matrix, in general, is non-Hermitian. When we construct the initial and final state by a Euclidean path integral with a real-valued action, it seems reasonable that $S(\mathcal{T}_A^{\psi|\phi})$ can be positive. Whereas, for more generic cases, such as the insertion of non-Hermitian operators in Euclidean path integrals, QFTs with Lorentzian signature, and the actual finite-dimensional quantum systems, we have to ask what kind of transition matrices generate positive or non-negative pseudo entropy. It is the central motivation for the present article.

The problem is closely related to non-Hermitian physics, which has been extensively studied recently; see the review \cite{Ashida:2020dkc,bender2019pt}. It is found the eigenvalues of the non-Hermitian Hamiltonian can be real. The system satisfying parity-time ($\mathcal{PT}$) symmetry is one of the most important classes \cite{Bender:1998ke,Bender:2007nj}. The notion of pseudo-Hermiticity is handy to characterize a class of non-Hermitian matrices with real eigenvalues \cite{Mostafazadeh:2001jk,Mostafazadeh:2001nr,Mostafazadeh:2008pw} if the matrix has a complete biorthonormal eigenbasis. An operator $M$ is said to be $\eta$-pseudo-Hermitian if there exists a Hermitian invertible operator $\eta$ such that
\bea
M^\dagger=\eta M \eta^{-1}.\label{phc}
\eea
If $\eta$ is the identity, the pseudo-Hermitian condition reduces to Hermiticity.
The necessary and sufficient conditions for the pseudo-Hermiticity of $M$  are given by the following theorem \cite{Mostafazadeh:2001jk}.
\begin{theorem}\label{theor1} An operator $M$ with a complete biorthonormal eigenbasis and a discrete spectrum is pseudo-Hermitian if and only if one of the following conditions hold:
\begin{enumerate}
\item The eigenvalues of $M$ are real.
\item The complex eigenvalues come in complex conjugate pairs, and the degeneracy of the eigenvalues are same.
\end{enumerate}
\end{theorem}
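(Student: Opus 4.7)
The plan is to work in the biorthonormal eigenbasis of $M$, where both $M$ and $M^\dagger$ admit a ``spectral'' decomposition, and to exploit the equivalent intertwining form $\eta M = M^\dagger \eta$ of pseudo-Hermiticity. Fix notation: $M|n,a\rangle = \lambda_n |n,a\rangle$ and $M^\dagger|\tilde n,a\rangle = \lambda_n^* |\tilde n,a\rangle$, with biorthonormality $\langle \tilde m, b|n,a\rangle = \delta_{mn}\delta_{ab}$ and completeness $\sum_{n,a} |n,a\rangle\langle \tilde n,a| = \mathbf{1}$, where $a$ labels the degeneracy of $\lambda_n$ with multiplicity $d_n$.

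For the necessity direction, suppose $M^\dagger = \eta M \eta^{-1}$. Then $M$ and $M^\dagger$ are similar, so they share the same spectrum with the same algebraic multiplicities. Since the eigenvalues of $M^\dagger$ are precisely $\{\lambda_n^*\}$ with multiplicities $d_n$, the multiset $\{\lambda_n\}$ must be invariant under complex conjugation, and $d_n = d_{\bar n}$ whenever $\lambda_{\bar n} = \lambda_n^*$. This yields the stated dichotomy immediately.

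For the sufficiency direction I would construct $\eta$ blockwise. For each real eigenvalue $\lambda_n$ set
\[\eta^{(n)} := \sum_a \sigma_n^a \, |\tilde n,a\rangle\langle \tilde n,a|, \qquad \sigma_n^a \in \{+1,-1\};\]
for each complex-conjugate pair $(\lambda_n, \lambda_{\bar n} = \lambda_n^*)$, whose degeneracies match by assumption so a bijection of the degeneracy labels exists, set
\[\eta^{(n,\bar n)} := \sum_a \bigl(|\tilde n,a\rangle\langle \tilde{\bar n},a| + |\tilde{\bar n},a\rangle\langle \tilde n,a|\bigr),\]
and let $\eta := \sum_n \eta^{(n)} + \sum_{\text{pairs}} \eta^{(n,\bar n)}$. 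Hermiticity is manifest block by block, and invertibility holds because in the basis $\{|\tilde n, a\rangle\}$ each block is $\pm \mathbf{1}$ or an involutive off-diagonal pairing. To verify $\eta M = M^\dagger \eta$, act on $|m,b\rangle$: in the real case $\eta |m,b\rangle = \sigma_m^b |\tilde m,b\rangle$, so $\eta M|m,b\rangle = \sigma_m^b \lambda_m |\tilde m,b\rangle$ and $M^\dagger \eta|m,b\rangle = \sigma_m^b \lambda_m^* |\tilde m,b\rangle$, which agree because $\lambda_m\in\mathbb{R}$; in the complex case, biorthonormality collapses the sum to $\eta|m,b\rangle = |\tilde{\bar m},b\rangle$, whence $\eta M|m,b\rangle = \lambda_m |\tilde{\bar m},b\rangle$ equals $M^\dagger \eta|m,b\rangle = \lambda_{\bar m}^* |\tilde{\bar m},b\rangle$ precisely because $\lambda_{\bar m} = \lambda_m^*$.

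The main obstacle is the sufficiency step: the off-diagonal block $\eta^{(n,\bar n)}$ has to realise an invertible pairing between the eigenspaces of $\lambda_n$ and $\lambda_n^*$, which is exactly where condition~2's equal-degeneracy hypothesis enters---without it no such intertwiner can exist. I would also remark that $\eta$ is far from unique: the signs $\sigma_n^a$ and the bijection between conjugate degenerate eigenvectors are free, a flexibility that will matter when the theorem is later applied to reduced transition matrices to engineer real pseudo entropies.
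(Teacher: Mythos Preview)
Your argument is correct and is essentially the standard Mostafazadeh proof: necessity via similarity of $M$ and $M^\dagger$, sufficiency via an explicit block construction of $\eta$ in the biorthonormal basis. Note, however, that the paper does not supply its own proof of this theorem---it simply quotes the result from \cite{Mostafazadeh:2001jk}---so there is no in-house argument to compare against; your write-up faithfully reproduces the approach of that reference, including the freedom in the signs $\sigma_n^a$ and in the choice of bijection between conjugate eigenspaces.
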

\begin{remark}
{The existence of a biorthonormal eigenbasis in Hilbert space with respect to $M$ is equivalent to  $M$ being diagonalizable. According to the Theorem \ref{theor1}, when the reduced transition matrix $X_A\equiv\tr_{\bar A}[X]$ is diagonalizable, the necessary and sufficient condition of $\tr[(X_A)^n]$ being real is that $X_A$ is $\eta_A$-pseudo-Hermitian \footnote{When $X_A$ is non-diagonalizable,  a pseudo-Hermitian $X_A$ can still give a real $\tr[(X_A)^n]$, but the converse is not necessarily true. We show an example in Appendix \ref{feiduijiao} where $\tr[(X_A)^n]$ is real, but $X_A$ is not pseudo-Hermitian.}. 
The pseudo R\'enyi entropy $S^{(n)}(X_A)$, however, may not be real in this case. Apparently, to guarantee the reality of $S^{(n)}(X_A)$, one should require $\tr[(X_A)^n]>0$, which gives more constraints on $\eta_A$.

Many studies on $\mathcal{P}\mathcal{T}$-symmetric or pseudo-Hermitian Hamiltonian mainly focus on finite-dimensional systems. It is expected the main results can also be applicable to QFTs, see, e.g., \cite{PhysRevD.71.025014,PhysRevD.98.125003,Ashida_2017,Bender:2021fxa} and references therein.}
\end{remark}
\section{General form of transition matrix }
\textcolor{black}{We seek to construct transition matrix $X$ $\eqref{defofX}$ with positive $\tr[(X_A)^n]$ by means of pseudo-Hermiticity. We start with a  proposition for transition matrices that strongly suggests a potential connection between pseudo entropy and pseudo-Hermiticity.}
\begin{proposition}\label{propoforph}
  \textcolor{black}{All transition matrices \eqref{defofX} in finite-dimensional Hilbert space are pseudo-Hermitian.}
\end{proposition}

\begin{proof}
  \textcolor{black}{The statement comes from the fact that the product of any two positive semi-definite matrices can be diagonalized and possesses non-negative eigenvalues. One can construct a complete biorthonormal eigenbasis within the finite-dimensional Hilbert space by utilizing the eigenvectors of a given diagonalizable matrix. Consequently, all finite-dimensional transition matrices of the form \eqref{defofX} have a complete biorthonormal eigenbasis and non-negative eigenvalues,  satisfying the conditions required for Theorem \ref{theor1}.  Then, it follows that all transition matrices are pseudo-Hermitian.}
\end{proof}
\begin{remark}
  \textcolor{black}{Building to the above proposition, for any transition matrix of the form \eqref{defofX} in finite-dimensional Hilbert space, we can construct a $\eta$ matrix that satisfies Eq.\eqref{phc}  by using its biorthonormal eigenbasis \cite{Mostafazadeh:2001jk}. As a special case, for a pure state transition matrix \eqref{eq2:transitionMa} in $d$-dimensional Hilbert space, a valid $\eta$ matrix satisfying  Eq.\eqref{phc} can be found as}\footnote{\textcolor{black}{The $|\psi\rangle,~|\psi_1\rangle,~|\psi_2\rangle,...,|\psi_{d-1}\rangle$ in \eqref{etapt} form an orthonormal basis in the Hilbert space. Additionally, $|\psi\rangle$ and $|\psi_1\rangle$ satisfy the equation $a |\psi\rangle + b |\psi_1\rangle = |\phi\rangle$.}}
  \begin{align}
  \textcolor{black}{\eta=\frac{|\phi\rangle\langle\phi|}{|\langle\phi|\psi\rangle|^2}+\sum_{i=1}^{d-1}|\psi_i\rangle\langle\psi_i|,}\label{etapt}
  \end{align}
\textcolor{black}{which has an inverse}
\begin{align}
\textcolor{black}{\eta^{-1}=|\psi\rangle\langle\psi|+\Big(|\psi_1\rangle-\frac{b^*}{a^*}|\psi\rangle\Big)\Big(\langle\psi_1|-\frac{b}{a}\langle\psi|\Big)+\sum_{i=2}^{d-1}|\psi_i\rangle\langle\psi_i|.}
\end{align}
\end{remark}

\textcolor{black}{Although all transition matrices are pseudo-Hermitian, it does not guarantee  that the reduced transition matrices share the same property. The two propositions below jointly  establish a necessary and sufficient condition for a reduced transition matrix to be pseudo-Hermitian.}
\begin{proposition}\label{propo1}Any operator $\mo$ can be decomposed as
\bea
\mo=\mo_1+i \mo_2,
\eea
where $\mo_1$ and $\mo_2$ are $\eta$-pseudo-Hermitian operators, $\eta$ can be any Hermitian invertible operator.
\end{proposition}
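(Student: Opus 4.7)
The plan is to mimic the classical decomposition of a complex operator into its Hermitian and anti-Hermitian parts, but with the ordinary adjoint $\mathcal{O}^\dagger$ replaced by the ``$\eta$-adjoint'' $\eta^{-1}\mathcal{O}^\dagger \eta$. Since an operator $M$ is $\eta$-pseudo-Hermitian precisely when $M = \eta^{-1} M^\dagger \eta$, the natural guess is to set
\[
\mathcal{O}_1 := \tfrac{1}{2}\bigl(\mathcal{O} + \eta^{-1}\mathcal{O}^\dagger\eta\bigr), \qquad \mathcal{O}_2 := \tfrac{1}{2i}\bigl(\mathcal{O} - \eta^{-1}\mathcal{O}^\dagger\eta\bigr).
\]
The first thing I would do is note that $\mathcal{O}_1 + i\,\mathcal{O}_2 = \mathcal{O}$ holds by an immediate algebraic cancellation, so the decomposition is recovered regardless of the choice of $\eta$.

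The only substantive step is verifying that each of $\mathcal{O}_1, \mathcal{O}_2$ is $\eta$-pseudo-Hermitian. For $\mathcal{O}_1$ I would take the adjoint, using $\eta^\dagger = \eta$ (from the Hermiticity of $\eta$) and hence $(\eta^{-1})^\dagger = \eta^{-1}$, to get
\[
\mathcal{O}_1^\dagger = \tfrac{1}{2}\bigl(\mathcal{O}^\dagger + \eta\,\mathcal{O}\,\eta^{-1}\bigr),
\]
and then compare directly with $\eta\,\mathcal{O}_1\,\eta^{-1} = \tfrac{1}{2}\bigl(\eta\,\mathcal{O}\,\eta^{-1} + \mathcal{O}^\dagger\bigr)$. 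The two expressions agree, establishing $\mathcal{O}_1^\dagger = \eta\,\mathcal{O}_1\,\eta^{-1}$. The computation for $\mathcal{O}_2$ is identical line-for-line, with the minus sign inside $\mathcal{O}_2$ and the complex conjugation of the prefactor $\frac{1}{2i}$ cancelling so that $\mathcal{O}_2^\dagger = \eta\,\mathcal{O}_2\,\eta^{-1}$ again holds.

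Because the construction is fully explicit there is no serious obstacle; the only thing to track carefully is the interplay between ordinary Hermitian conjugation and conjugation by $\eta^{-1}$, which behaves cleanly precisely because $\eta$ is Hermitian and invertible. As a bonus, the arbitrariness of $\eta$ in the formulas above shows that the decomposition is highly non-unique, a flexibility that the paper will presumably exploit when subsequently building transition matrices $X$ whose reduced version $X_A$ yields real (or positive) $\tr[(X_A)^n]$.
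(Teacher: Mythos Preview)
Your proposal is correct and follows exactly the same approach as the paper: both define $\mathcal{O}_1=\tfrac{1}{2}(\mathcal{O}+\eta^{-1}\mathcal{O}^\dagger\eta)$ and $\mathcal{O}_2=\tfrac{1}{2i}(\mathcal{O}-\eta^{-1}\mathcal{O}^\dagger\eta)$ as the $\eta$-twisted analogue of the standard Hermitian/anti-Hermitian split. If anything, your write-up is more complete, since the paper simply records the decomposition and identifies $\mathcal{O}_1,\mathcal{O}_2$ without spelling out the verification that each piece satisfies $\mathcal{O}_j^\dagger=\eta\,\mathcal{O}_j\,\eta^{-1}$.
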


\begin{proof} For any operator $\mo$, we can divide it into two parts
\bea\label{decom}
\mo=\frac{\mo+\eta^{-1}\mo^{\dagger}\eta}{2}+i \frac{\mo-\eta^{-1}\mo^{\dagger}\eta}{2i},
\eea
where $\eta$ is any Hermitian invertible operator. For $\eta$ being identity, (\ref{decom}) is the well-known result that any operator can be decomposed as linear combinations of two Hermitian operators.  Where $\mo_1= \frac{\mo+\eta^{-1}\mo^{\dagger}\eta}{2}$ and $\mo_2= \frac{\mo-\eta^{-1}\mo^{\dagger}\eta}{2i}$.
\end{proof}

\begin{proposition}\label{propo2}  $X_{A(\bar A )}$ is $\eta_{A(\bar A)}$-pseudo-Hermitian, if and only if the transition matrix $X$ can be written as
\bea\label{transitionXcondition}
X=X_1+i X_2,
\eea
where $X_1$ and $X_2$ are both $\eta$-pseudo-Hermitian with $\eta=\eta_A \otimes \eta_{\bar A}$. Further, $X_2$ satisfies $\tr_{\bar A(A)}X_2=0$.
\end{proposition}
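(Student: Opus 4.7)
The plan is to reduce the proposition to a single partial-trace identity and then apply Proposition \ref{propo1}. The key ingredient is the identity
\begin{equation}
\tr_{\bar A}[\eta^{-1} Y \eta]=\eta_A^{-1}\,(\tr_{\bar A} Y)\,\eta_A,
\end{equation}
valid for any operator $Y$ on $\mathcal{H}_A\otimes\mathcal{H}_{\bar A}$ whenever $\eta=\eta_A\otimes\eta_{\bar A}$. I would establish it by factoring $\eta=(\eta_A\otimes\mathbf{1})(\mathbf{1}\otimes\eta_{\bar A})$, pulling the $A$-only factors outside the partial trace, and using the cyclicity of $\tr_{\bar A}$ to cancel $\mathbf{1}\otimes\eta_{\bar A}^{-1}$ against $\mathbf{1}\otimes\eta_{\bar A}$. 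Combined with the elementary fact $\tr_{\bar A}(Y^\dagger)=(\tr_{\bar A}Y)^\dagger$, this is the only tensor-product input required.

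With that lemma in hand, for the ``only if'' direction I would apply Proposition \ref{propo1} with $\eta=\eta_A\otimes\eta_{\bar A}$ to decompose $X=X_1+iX_2$ with $X_1,X_2$ both $\eta$-pseudo-Hermitian. Taking $\tr_{\bar A}$ of $X_2=(X-\eta^{-1}X^\dagger\eta)/(2i)$ and invoking the identity yields $2i\,\tr_{\bar A}X_2=X_A-\eta_A^{-1}X_A^\dagger\eta_A$, which vanishes precisely when $X_A^\dagger=\eta_A X_A\eta_A^{-1}$, i.e.\ when $X_A$ is $\eta_A$-pseudo-Hermitian. Swapping the roles of $A$ and $\bar A$ handles the parallel statement for $X_{\bar A}$ and $\tr_A X_2=0$.

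For the converse direction, assume $X=X_1+iX_2$ with both $X_i$ being $\eta$-pseudo-Hermitian and $\tr_{\bar A}X_2=0$. Then $X_A=\tr_{\bar A}X_1$, and applying the identity to $X_1^\dagger=\eta X_1\eta^{-1}$ gives $X_A^\dagger=\tr_{\bar A}(\eta X_1\eta^{-1})=\eta_A X_A\eta_A^{-1}$, so $X_A$ is $\eta_A$-pseudo-Hermitian. The only mildly delicate step in the whole argument is the partial-trace identity itself, which relies crucially on the product structure $\eta=\eta_A\otimes\eta_{\bar A}$; for a generic Hermitian invertible $\eta$ on the full Hilbert space the identity fails and the equivalence in the proposition would break down. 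Everything else is direct algebra on top of Proposition \ref{propo1}.
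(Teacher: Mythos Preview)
Your proposal is correct and follows essentially the same route as the paper's proof: decompose $X$ via Proposition~\ref{propo1} with $\eta=\eta_A\otimes\eta_{\bar A}$, then reduce both directions to the partial-trace identity $\tr_{\bar A}[\eta^{-1}Y\eta]=\eta_A^{-1}(\tr_{\bar A}Y)\eta_A$. The only difference is presentational---you isolate and justify this identity explicitly, whereas the paper invokes it tacitly in the displayed computations; the logical structure of both the ``only if'' and the converse direction is otherwise identical.
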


\begin{proof} Using the result of Proposition \ref{propo1}, let's define the operator
\bea
&&X_1:=\frac{1}{2}(X+\eta^{-1} X^\dagger \eta ),\nonumber \\
&&X_2:= \frac{i}{2}(\eta^{-1} X^\dagger \eta - X).
\eea
Since $X_{A(\bar A )}$ is $\eta_{A(\bar A)}$-pseudo-Hermitian, we have
\bea
\tr_{\bar A} X_2=\frac{i}{2}[\eta_A^{-1}(\tr _{\bar A}X^\dagger)\eta_{A}-\tr_{\bar A}X]=\frac{i}{2}(\eta_A^{-1}X_{A}^\dagger \eta_{A}-X_A)=0.
\eea
Similarly, one could show $\tr_A X_2=0$.

If $X$ can be written as (\ref{transitionXcondition}) and $\tr_{\bar A} X_2=0$, we have $X_{A}=\tr_{\bar A} X_1$ and
\bea
X_{A}^\dagger=\tr_{\bar A}X^\dagger=\tr_{\bar A} (\eta X_1 \eta^{-1})=\eta_A (\tr_{\bar A} X_1) \eta_A^{-1}=\eta_A X_A \eta_A^{-1}.
\eea
Thus $X_A$ is $\eta_A$-pseudo-Hermitian.
 Similarly, we can show $X_{\bar A}$ is $\eta_{\bar A}$-pseudo-Hermitian.
\end{proof}

An obvious corollary of Proposition $\ref{propo2}$ is that a $\eta$-pseudo-Hermitian transition matrix with $\eta=\eta_A\otimes\eta_{\bar A}$ generates  pseudo-Hermitian reduced transition matrices. Let's first focus on the pure state transition matrix $\mathcal{T}^{\psi|\phi}$ and show how to construct a $\eta$-pseudo-Hermitian transition matrices. The $\eta$-pseudo-Hermiticity gives constraints on the pure states $|\psi\rangle$ and $|\phi\rangle$.
We have the following theorem.

\begin{theorem}\label{theor2}
A transition matrix $\mathcal{T}^{\psi|\phi}$ is $\eta$-pseudo-Hermitian , if and only if it can be written as follows.
\bea\label{Tpseudo}
\mathcal{T}^{\psi|\phi}=\frac{|\psi\rangle \langle \psi| \eta}{\langle \psi|\eta |\psi\rangle},
\eea
\textcolor{black}{where $\eta$ is both Hermitian and invertible.}
 \end{theorem}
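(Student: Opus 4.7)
The plan is to prove the two directions of the equivalence separately, exploiting only the rank-one structure of $\mathcal{T}^{\psi|\phi}$ and the Hermiticity of $\eta$.

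For the \emph{sufficiency} direction --- that the right-hand side of (\ref{Tpseudo}) is indeed a pseudo-Hermitian transition matrix --- I would set $M:=|\psi\rangle\langle\psi|\eta/\langle\psi|\eta|\psi\rangle$ and check $\eta M\eta^{-1}=M^{\dagger}$ by a one-line computation: since $\eta$ is Hermitian, $\langle\psi|\eta|\psi\rangle$ is real, so $M^{\dagger}=\eta|\psi\rangle\langle\psi|/\langle\psi|\eta|\psi\rangle$, which coincides with $\eta M\eta^{-1}$ after cancelling $\eta\eta^{-1}$ on the right of $M$. I would also note that $M$ genuinely has the form (\ref{transitionTdefinition}): taking $\langle\phi|:=c\langle\psi|\eta$ for any nonzero scalar $c$, the factor $c$ cancels between numerator and denominator of $|\psi\rangle\langle\phi|/\langle\phi|\psi\rangle$, so $M=\mathcal{T}^{\psi|\phi}$.

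For the \emph{necessity} direction I would insert the definition of $\mathcal{T}^{\psi|\phi}$ into the pseudo-Hermiticity identity $(\mathcal{T}^{\psi|\phi})^{\dagger}=\eta\,\mathcal{T}^{\psi|\phi}\,\eta^{-1}$ and clear denominators, obtaining
\[
\langle\phi|\psi\rangle\,|\phi\rangle\langle\psi|=\langle\psi|\phi\rangle\,\eta|\psi\rangle\langle\phi|\eta^{-1}.
\]
Both sides are nonzero rank-one operators, so they can agree only if their kets and bras are proportional; this forces $|\phi\rangle=\alpha\,\eta|\psi\rangle$ for some $\alpha\neq 0$, with the bra piece fixed consistently by the same relation. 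Substituting this back into $\mathcal{T}^{\psi|\phi}=|\psi\rangle\langle\phi|/\langle\phi|\psi\rangle$ causes $\alpha$ to drop out, yielding exactly (\ref{Tpseudo}).

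The main obstacle is the step extracting the proportionality $|\phi\rangle\propto\eta|\psi\rangle$ from the rank-one identity above. This is routine linear algebra (evaluate both sides on a vector on which the bra factor is nonzero, then compare the resulting kets) but should be recorded explicitly. Along the way I would also observe that $\langle\phi|\psi\rangle\neq 0$, which is built into the definition of the transition matrix, automatically guarantees $\langle\psi|\eta|\psi\rangle=\alpha^{-1}\langle\phi|\psi\rangle\neq 0$, so the denominator in (\ref{Tpseudo}) is well-defined. Everything else reduces to direct computation.
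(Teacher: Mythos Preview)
Your argument is correct and follows essentially the same route as the paper's: both extract the proportionality $|\phi\rangle\propto\eta|\psi\rangle$ from the rank-one pseudo-Hermiticity identity and then substitute back into $\mathcal{T}^{\psi|\phi}$. You are in fact more complete than the paper, which writes out only the necessity direction; one tiny slip is that $\langle\psi|\eta|\psi\rangle=\bar\alpha^{-1}\langle\phi|\psi\rangle$ rather than $\alpha^{-1}\langle\phi|\psi\rangle$ (take the adjoint of $|\phi\rangle=\alpha\,\eta|\psi\rangle$), but this does not affect the nonvanishing of the denominator.
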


\begin{proof}
Assume the transition matrix $\mathcal{T}^{\psi|\phi}=\frac{|\psi\rangle\langle\phi|}{\langle\phi|\psi\rangle}$ is $\eta$-pseudo-Hermitian, we have
\bea
\frac{\eta |\psi\rangle \langle \phi| \eta^{-1}}{\langle \phi| \psi\rangle}=\frac{|\phi\rangle \langle \psi|}{\langle \psi |\phi\rangle}.
\eea
It leads to
\bea
\eta |\psi\rangle =\frac{\langle \psi| \eta |\psi\rangle}{\langle\psi  |\phi\rangle} |\phi\rangle.
\eea
Taking the above formula into (\ref{eq2:transitionMa}), one can see the transition matrix $\mathcal{T}^{\psi|\phi}$ is of the form \eqref{Tpseudo}. \textcolor{black}{On the other hand, it is easy to show that if a transition matrix $\mathcal{T}$ takes the form of Eq.\eqref{Tpseudo}, where $\eta$ is an invertible Hermitian matrix, $\mathcal{T}$ is $\eta$-pseudo-Hermitian.}
\end{proof}

\begin{remark}
In the Theorem \ref{theor2}, the pure states $|\psi\rangle$ and $|\phi\rangle$ of the transition matrix $\mathcal{T}^{\psi|\phi}$ are assumed to be non-orthogonal. For the case of $\langle \phi |\psi\rangle =0$, one may consider the matrix $\mathcal{T}^{'\psi|\phi}=|\psi\rangle \langle \phi|$. If it is $\eta$-pseudo-Hermitian, we have
\bea
\eta|\psi\rangle \langle \phi|\eta^{-1}= |\phi\rangle \langle \psi|.
\eea
To satisfy the above condition, it is necessary that $\langle \psi |\eta|\phi\rangle=\langle \phi |\eta|\psi\rangle\ne 0$, otherwise we would have $\eta |\psi\rangle =0$, which is impossible for $|\psi\rangle \ne 0$. Further, we have the relation
\bea
|\phi\rangle =\frac{\eta|\psi\rangle}{\langle \psi|\eta |\phi\rangle}.
\eea
Therefore, the transition matrix $|\psi\rangle \langle \phi|$ is given by
\bea\label{TpseudoAlt}
\mathcal{T}^{'\psi|\phi}=\frac{|\psi\rangle \langle \psi| \eta}{\langle \psi|\eta |\phi\rangle},
\eea
for the case $\langle \psi|\phi\rangle=0$.
\end{remark}
A $\eta$-pseudo-Hermitian transition matrix $\mathcal{T}^{\psi|\phi}$ with $\eta=\eta_{A}\otimes\eta_{\bar A}$  generates  pseudo-Hermitian reduced transition matrices $\mathcal{T}^{\psi|\phi}_{A}$ and $\mathcal{T}^{\psi|\phi}_{\bar A}$. Whereas, the pseudo-Hermiticity of  $\mathcal{T}^{\psi|\phi}_{A(\bar A)}$ only guarantees $\tr[(\mathcal{T}^{\psi|\phi}_{A(\bar A)})^n]$ is real, not necessarily positive. One can expect to give more constraints on $\eta_{A(\bar A)}$ to make $\tr[(\mathcal{T}^{\psi|\phi}_{A(\bar A)})^n]>0$. The following theorem provides a construction in which the eigenvalues of the reduced transition matrix are all non-negative and not all zeros so that $\tr[(\mathcal{T}^{\psi|\phi}_{A(\bar A)})^n]>0$.

\begin{theorem}\label{corol3}
$\mathcal{T}^{\psi|\phi}$ is $\eta$-pseudo-Hermitian with $\eta=\eta_A\otimes \eta_{\bar A}$.
\begin{enumerate}
    \item  If $\eta_{A(\bar A)}$ is positive or negative definite operator, the eigenvalues of $\mathcal{T}^{\psi|\phi}_{A(\bar A)}$ are real.
    \item  {If  $\eta_{A}$ is positive or negative definite  and $\eta_{\bar{A}}$ is positive or negative definite too, then the eigenvalues of $\mathcal{T}^{\psi|\phi}_{A(\bar A)}$ are non-negative and not all zeros.}
\end{enumerate}
\end{theorem}

\begin{proof}
In general, $\mathcal{T}^{\psi|\phi}_{A(\bar A)}$ are expected to have complex eigenvalues  for arbitrary Hermitian $\eta_{A(\bar A)}$. Let us define
\bea
\tilde{\mathcal{T}}_A:= \frac{\tr_{\bar A}\left[|\psi\rangle \langle \psi|\eta_{\bar A}\right]}{\langle \psi| \eta |\psi\rangle}.
\eea
It is obvious that $\tilde{\mathcal{T}}_A$ is a Hermitian operator.
By using (\ref{Tpseudo}) we have
\bea
\mathcal{T}^{\psi|\phi}_A=\tr_{\bar A}\mathcal{T}^{\psi|\phi} =\tilde{\mathcal{T}}_A  \eta_A.
\eea
Assume that $\eta_{A}$ is a positive (negative) definite operator. There exists a Hermitian (skew-Hermitian) invertible operator $\eta_{A}^{1/2}$ such that  $(\eta_{ A}^{1/2})^2=\eta_{ A}$. 
Thus we have
\bea\label{co3}
\eta_A^{1/2} \mathcal{T}^{\psi|\phi}_A \eta_A^{-1/2}=\eta_{A}^{1/2} \tilde{\mathcal{T}}_A \eta_A^{1/2}.
\eea
$\mathcal{T}^{\psi|\phi}_A$ is similar to the operator on the right-hand side of the above equation, which is  Hermitian. Thus, the eigenvalues of $\mathcal{T}^{\psi|\phi}_A$ are real.

{If further assuming $\eta_{\bar A}$ is positive (negative) definite, we have}
\bea
\tilde{\mathcal{T}}_A=\frac{\tr_{\bar A}\left[\eta_{\bar A}^{1/2} |\psi\rangle \langle \psi| \eta_{\bar A}^{1/2}\right]}{\langle \psi| \eta |\psi\rangle},\label{ttaat}
\eea
where we define the  Hermitian (skew-Hermitian) invertible operator $\eta_{\bar A}^{1/2}$ and use the cyclic property of partial trace. It is not hard to show $\eta_A^{1/2}\tilde{\mathcal{T}}_A\eta_A^{1/2}$ is always positive semi-definite and not null, the eigenvalues of which are non-negative and not all zeros.\footnote{\textcolor{black}{Please refer to  Appendix \ref{proofoftheo3} for a proof.}} Therefore, using (\ref{co3}), we have proved the eigenvalues of $\mathcal{T}^{\psi|\phi}_A$ are non-negative and not all zeros. By a similar process, one could show the eigenvalues of $\mathcal{T}^{\psi|\phi}_{\bar A}$ are all non-negative and not all zeros.
\end{proof}
\textcolor{black}{Since the pure state transition matrix \eqref{eq2:transitionMa} has unit trace, for a transition matrix $\mathcal{T}^{\psi|\phi}$ satisfying condition 2 in Theorem \ref{corol3}, we have $0<\tr[(\mathcal{T}_A^{\psi|\phi})^n]\leq1~(n\geq2)$. Hence, we construct a class of transition matrices  for which the pseudo (R\'enyi) entropies \eqref{pseudoRenyi} are non-negative.}
\begin{remark}
    The above theorem can be generalized to the case where $|\psi\rangle$ and $|\phi\rangle$ are orthogonal to each other. Building on \eqref{TpseudoAlt}, for the second case ($\eta_{A}$ is positive or negative definite  and $\eta_{\bar{A}}$ is positive or negative definite too), one has an expression similar to Eq.\eqref{co3},
    \begin{align}
    \eta_A^{1/2}\mathcal{T}_{A}^{'\psi|\phi}\eta_A^{-1/2}=\eta_A^{1/2}\tilde{T}'_A\eta_A^{1/2},\label{t'21}
    \end{align}
    where $\tilde{T}'_A=\tr_{\bar A}\left(\eta_{\bar A}^{1/2}|\psi\rangle \langle \psi|\eta_{\bar A}^{1/2}\right)/\langle \psi| \eta |\phi\rangle$ is a positive or negative semi-definite matrix depending on $\eta_{\bar A}$ and the value of $\langle \psi| \eta |\phi\rangle$. Thus the right-hand side of \eqref{t'21} is positive or negative semi-definite, depending on $\eta_{A(\bar A)}$ and $\langle \psi| \eta |\phi\rangle$. In what follows, we assume that it is positive semi-definite, which can always be achieved by redefining $|\phi\rangle$ as $-|\phi\rangle$.
\end{remark}

%
%
So far, we have studied the construction of the pure state transition matrix using the pseudo-Hermiticity to obtain non-negative pseudo (R\'enyi) entropy.  The following theorem shows that, for generic cases (i.e., the transition matrix takes the form \eqref{defofX}), the construction we find is closely related to the results of the pure state transition matrix.
\begin{theorem}\label{corol2}
Any diagonalizable $\eta$-pseudo-Hermitian matrix $M$ can be expressed as
\bea\label{coro2}
M=\sum_i m_i  \mathcal{T}^{\psi_i| \phi_i},
\eea
where $m_i$ are real, $ \mathcal{T}^{\psi_i| \phi_i}$ are $\eta$-pseudo-Hermitian transition matrices between two pure states $|\psi_i\rangle $ and $|\phi_i\rangle$ , which take the form (\ref{Tpseudo}) or (\ref{TpseudoAlt}).
\end{theorem}
The proof can be found in Appendix \ref{corollary2}. \textcolor{black}{By combining Proposition \ref{propoforph} with the above theorem, we can  conclude that any mixed state transition matrix $X$  of the form $\eqref{defofX}$ in finite-dimensional Hilbert space can be decomposed into the sum of a series of pure state transition matrices of the form (\ref{Tpseudo}) or (\ref{TpseudoAlt}), that is,}
\begin{align}
\textcolor{black}{X=\frac{\rho_1\rho_2}{\tr[\rho_1\rho_2]}=\sum_i m_i  \mathcal{T}^{\psi_i| \phi_i},\label{recast1}}
\end{align}
\textcolor{black}{where $m_i$ are real and both $X$ and $ \mathcal{T}^{\psi_i| \phi_i}$ are pseudo-Hermitian under some $\eta$ matrix. We may further recast $X$ into a more general form than $\eqref{recast1}$ by combining Theorem \ref{corol2} with Proposition \ref{propo1}.}
\begin{corollary}\label{coro1111x}
 For any transition matrix $X$ of form $\eqref{defofX}$, assuming that there is a Hermitian invertible matrix $\eta$ such that $X\pm\eta^{-1}X^\dagger \eta$ are diagonalizable, then $X$ can be expressed as
\bea\label{theorem2}
X=\sum_i x^1_i \mathcal{T}_1^{\psi_i|\phi_i}+i \sum_j x^2_j  \mathcal{T}_2^{\psi_j|\phi_j},
\eea
where $\mathcal{T}_{\alpha}^{\psi_i|\phi_i}$  are $\eta$-pseudo-Hermitian matrices which take the form (\ref{Tpseudo}) or (\ref{TpseudoAlt}) and $x^\alpha_i$ are real ($\alpha=1,2$).
\end{corollary}
\textcolor{black}{Combining Proposition \ref{propoforph}, Theorem \ref{corol3}, and Theorem \ref{corol2}, we get the second corollary as a summary of our results for the mixed state transition matrix.}
\begin{corollary}\label{corol4}
\textcolor{black}{The transition matrix $X$ taking the form \eqref{defofX} is $\eta$-pseudo-Hermitian with $\eta=\eta_A\otimes \eta_{\bar A}$. Assume both $\eta_{A}$ and $\eta_{\bar A}$ are positive or negative definite operators. If $m_i$ in Eq.\eqref{recast1} $>0$, the eigenvalues of $X_A$ and $X_{\bar A}$ are non-negative and not all zeros.}
\end{corollary}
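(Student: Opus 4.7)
The plan is to reduce Corollary \ref{corol4} to Corollary \ref{corol3} by showing that $X_A$ is similar, via a single fixed similarity transformation, to a positive linear combination of positive semi-definite operators. The argument proceeds in three stages.

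First, I would invoke the constraint from Theorem \ref{theor3} that $\tr_{\bar A}\sum_j x^2_j \mathcal{T}^{\psi_j|\phi_j}=0$. Partial-tracing $X$ then annihilates the $i\sum_j x^2_j\mathcal{T}^{\psi_j|\phi_j}$ piece entirely, leaving
\begin{equation}
X_A=\sum_i x^1_i\,\mathcal{T}^{\psi_i|\phi_i}_A,
\end{equation}
a real-coefficient combination of reduced $\eta$-pseudo-Hermitian transition matrices, each of which has non-negative eigenvalues by Corollary \ref{corol3}.

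Second, the naive step ``a sum of matrices with non-negative eigenvalues has non-negative eigenvalues'' is false in general, so I need to exploit the explicit structure appearing in the proof of Corollary \ref{corol3}. There, each $\mathcal{T}^{\psi_i|\phi_i}_A$ is shown to be similar, via the \emph{same} transformation $\eta_A^{1/2}$, to the manifestly positive semi-definite operator $\eta_A^{1/2}\tilde{\mathcal{T}}^{\psi_i|\phi_i}_A\eta_A^{1/2}$. Applying the same conjugation to $X_A$ gives
\begin{equation}
\eta_A^{1/2}X_A\eta_A^{-1/2}=\sum_i x^1_i\,\eta_A^{1/2}\tilde{\mathcal{T}}^{\psi_i|\phi_i}_A\eta_A^{1/2}.
\end{equation}
Because $x^1_i>0$ and a positive linear combination of positive semi-definite operators is positive semi-definite, the right-hand side is positive semi-definite. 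Similarity preserves the spectrum, so the eigenvalues of $X_A$ are non-negative. The identical argument with $A\leftrightarrow\bar A$ covers $X_{\bar A}$.

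The main obstacle is ensuring the shared similarity step is genuinely available. This requires that all $\mathcal{T}^{\psi_i|\phi_i}$ appearing in the expansion of $X$ be $\eta$-pseudo-Hermitian with a single common $\eta=\eta_A\otimes\eta_{\bar A}$ (which is precisely the hypothesis of Theorem \ref{theor3}), and that the square root $\eta_A^{1/2}$ is well-defined in each of the sign sectors allowed by ``positive or negative definite'' --- when $\eta_A$ is negative-definite one must take a skew-Hermitian root, and care is needed to confirm that the resulting $\eta_A^{1/2}\tilde{\mathcal{T}}^{\psi_i|\phi_i}_A\eta_A^{1/2}$ is uniformly positive semi-definite across all four sign combinations of $(\eta_A,\eta_{\bar A})$. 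This check reduces to the one already carried out inside the proof of Corollary \ref{corol3}, so no new work is needed beyond invoking that result term by term and then using linearity.
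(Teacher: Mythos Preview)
Your proposal is correct and follows essentially the same route as the paper, which simply remarks that ``linear combinations of positive semi-definite operators with positive coefficients are also positive semi-definite.'' Your write-up is in fact more careful than the paper's one-line justification: you correctly flag that the individual $\mathcal{T}^{\psi_i|\phi_i}_A$ are not themselves Hermitian, so the positivity statement only applies after the common similarity by $\eta_A^{1/2}$, and you verify that this single conjugation works uniformly because all terms share the same $\eta=\eta_A\otimes\eta_{\bar A}$.
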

\begin{remark}
One can show Corollary \ref{corol4} by using the fact that the linear combinations of positive semi-definite operators with positive coefficients are still positive semi-definite. It can be slightly generalized. Once some coefficients, namely the elements in a subset $\{m_a\}$, are negative, and the transition matrix $\mathcal{T}^{\psi_a|\phi_a}$ satisfies $\tr_{A(\bar A)} \sum_a m_a\mathcal{T}^{\psi_a|\phi_a}=0$, one can show the eigenvalues of $X_{A(\bar A)}$ are non-negative and not all zeros.

Since $X$ has unit trace,  one can show $0<\tr[(X_{A(\bar{A})})^n]\leq1$. Then Corollary \ref{corol4} provides an approach to generate transition matrix $X$, such that the pseudo R\'enyi entropies of $X_{A(\bar A)}$ are non-negative. The above condition is sufficient to have non-negative pseudo R\'enyi entropy. In Appendix \ref{nonpositveexample}, we show an example of a finite dimension.  $\eta_{A(\bar A)}$ are neither positive nor negative, the pseudo R\'enyi entropy is positive for integers $n\ge 2$.
\end{remark}
\section{Construction of the transition matrix}
Next, we shall explore the concrete implementation of pseudo-Hermitian transition matrices satisfying Theorem \ref{corol3} in finite-dimensional quantum systems and QFTs, respectively.\footnote{\textcolor{black}{Considering a lattice-regularized QFT may ensure that Theorem \ref{corol3} can be applied legally.}} Before that, we would like to review the algebraic view of QFTs, which will be useful in the following discussion.

 Let's consider a bipartite system whose Hilbert space $\mathcal{H}$ is divided into two subspaces $\mathcal{H}_A$ and $\mathcal{H}_{\bar A}$ of the same dimension. Let's denote $\mathcal{R}_{A(\bar A)}$ to be the algebra of operators working on $\mathcal{H}_{A(\bar A)}$. The   algebra of operators for the total system is given by $\mathcal{R}=\mathcal{R}_A\otimes \mathcal{R}_{\bar A}$. Choosing a reference state $|\Psi\rangle$, one could generate the states in $\mathcal{H}$ by acting the operators in $\mathcal{R}_{A(\bar A)}$ on $|\Psi\rangle$. A state $|\Psi\rangle$ in $\mathcal{H}$ is said to be cyclic for an algebra  such as $\mathcal{R}_A$ if the set $\{ a|\Psi\rangle\}$, $a\in \mathcal{R}_A$ is dense in $\mathcal{H}$. For the finite-dimensional case, say $d$-dimension, one could choose the state
  \bea\label{eqnow1}
  |\Psi\rangle:= \sum_{k=1}^d c_k |k\rangle_A \otimes |k\rangle_{\bar A},
  \eea
 with $c_k\ne 0$, $|k\rangle_{A(\bar A)}$ are basis of $\mathcal{H}_{A(\bar A)}$. It can be proved that $|\Psi\rangle$ is cyclic for the algebra $\mathcal{R}_A$ and $\mathcal{R}_{\bar A}$.
For the infinite-dimensional case, in the framework of algebraic QFT, one could also construct the local algebra $\mathcal{R}(A)$  consisting of the local operators supported in the open region $A$. The Reeh-Schlieder theorem \cite{RS-theorem} states that the vacuum state $|0\rangle$ is cyclic for the algebra $\mathcal{A}$ associated with any bounded open region $A$. Therefore, one could construct any pure state $|\psi\rangle$ in $\mathcal{H}$ by only using the operators in $\mathcal{R}(A)$ or $\mathcal{R}(\bar A)$. In Appendix \ref{algebricreview}, we give a brief review of the associated aspects of algebraic QFT, see also \cite{Witten:2018zxz, Haag:91sp}.

Assume the state $|\Psi\rangle$ is cyclic for the algebra $\mathcal{R}(A)$, there exist $a$ and $\tilde a$ such that
\bea\label{statedefined}
a|\Psi\rangle =|\psi\rangle,\quad \tilde{a}|\Psi\rangle =|\phi\rangle.
\eea
Substituting the above expression into Eq.(\ref{Tpseudo}), the $\eta$-pseudo-Hermitian transition matrices can be expressed as
\bea\label{general}
\mathcal{T}^{\psi|\phi}=\frac{a|\Psi\rangle \langle \Psi| a^\dagger \eta}{\langle \Psi| a^\dagger \eta a|\Psi\rangle},
\eea
where $\eta$, according to Proposition \ref{propo2}, is the direct product of two invertible Hermitian matrices $\eta_{A}$ and $\eta_{\bar A}$.
\subsection{Finite-dimensional examples}
 For simplicity, in the following, we choose $c_k$ in \eqref{eqnow1} to be positive. One can assume the operators $a$, $\eta_A$, and $\eta_{\bar A}$ in Eq.\eqref{general} are of the form $a=\sum_{ij}a_{ij}|i\rangle_A ~_A\langle j|$, $\eta_{A}=\eta_{mn}|m\rangle_A ~_A\langle n|$
 and $\eta_{\bar A}=\bar{\eta}_{mn}|m\rangle_{\bar A} ~_{\bar A}\langle n|$, where the matrices $\eta_{mn}$ and $\bar{\eta}_{mn}$ are invertible and Hermitian. Then we construct the transition matrix $\mathcal{T}^{a}$ by using the formula (\ref{general}) as follows
 \bea\label{FiniteTA}
 \mathcal{T}^a_{A}=\mathcal{N} \sum_{\substack{j,k\\i',j',k'}}a_{jk'}c_{k'} a^*_{j'i'}c_{i'}\eta_{j'k}\bar{\eta}_{i'k'}|j\rangle_A ~_A\langle k|,
 \eea
where $\mathcal{N}$ is the normalization. We present some numerical $\eta$-pseudo-Hermitian transition matrices in Appendix \ref{app:E.1}.
\subsection{Examples in QFTs}\label{ExampleinQFT}
\subsubsection{2-dimensional rational CFTs}
The first  example in QFTs is the real-time evolution of pseudo  R\'enyi entropy in 2-dimensional CFTs considered in \cite{Guo:2022sfl}.
The subsystem $A$ is taken to be $[-L,L]$ ($L>0$) \footnote{Note that $A$ is chosen to be $[0,L]$ in \cite{Guo:2022sfl}, which, according to the spatial translation symmetry, does not affect the final conclusion. }. Consider the transition matrix
\bea\label{transitionCFT}
\mathcal{T}^{\mo}:=\frac{ \mo(x_1,t_1)|0\rangle \langle 0|\mo(x_2,t_2)}{\langle 0| \mo(x_2,t_2)\mo(x_1.t_1)|0\rangle},
\eea
where $\mo$ is assumed to be the Hermitian primary operator.
Consider the case with {$t_1=t_2=-t$} and $x_1=-x_2$.  We find for some rational CFT models $\tr[(\mathcal{T}^{\mathcal{O}}_A)^2]$ is always real \cite{Guo:2022sfl}. Assume the Hamiltonian $H$ commutes with the parity $\mathrm{P}$. The transition matrix can be written as
 \bea
\mathcal{T}^\mo=\frac{\mo(x_1,-t)|0\rangle \langle 0| \mo^\dagger(x_1,-t)\mathrm{P}}{\langle 0| \mo^\dagger(x_1,-t)\mathrm{P} \mo(x_1,-t)|0\rangle}.
 \eea
 It shows that $\mathcal{T}^\mo$ is $\mathrm{P}$-pseudo-Hermitian according to Eq.\eqref{Tpseudo}. Since the subsystem $A$ and $\bar A$ are invariant under parity $\mathrm{P}$, we could decompose $\mathrm{P}=\mathrm{P}_A\otimes \mathrm{P}_{\bar A}$, where $\mathrm{P}_{A (\bar A)}$ acts on $A(\bar A)$. We can schematically write $|\phi(x)\rangle$ ($x\in A$) and $|\phi(\bar x)\rangle$ ($\bar x \in \bar A$) as the basis for subsystem $A$ and $\bar A$. The action of $\mathrm{P}_{A(\bar A)}$ is regarded as $\mathrm{P}_A |\phi(x)\rangle= |\phi(-x)\rangle$. By definition $-x \in A$ if $x\in A$. $\mathrm{P}_A$ maps the basis of $A$ into itself.  The parity operators are invertible. Therefore, $\mathcal{T}^\mo_{A(\bar A)}$ is $\mathrm{P}_{A(\bar A)}$-pseudo-Hermitian. This leads to $\tr[ (\mathcal{T}^\mo_{A(\bar A)})^2]$ is real, which is consistent with the results in \cite{Guo:2022sfl}. Again, our results in this paper predict $\tr[(\mathcal{T}^\mo_{A(\bar A)})^n]$ should be real for any positive integer $n\ge 2$.
 \subsubsection{General QFTs in Minkowski space}
 Our next example aims to demonstrate the construction of a transition matrix generating non-negative pseudo R\'enyi entropies within the context of general QFT in Minkowski space. We will see that it is closely related to Tomita-Takesaki theory.\footnote{We briefly review the Tomita-Takesaki theory and its application in algebraic QFTs in Appendix \ref{algebricreview}.} Consider a QFT that resides in a $d$-dimensional Minkowski space $\mathcal{R}_{1,d-1}$. The metric on   $\mathcal{R}_{1,d-1}$ is given by $ds^2=-dt^2+dx^2+d\vec{y}^2$, where $\vec{y}$ are coordinates of $(d-2)$-dimensional Euclidean space. We choose the subsystem $A$ to be the half-space $\{(x,\vec{y})|x<0\}$ and $\bar A$ its complement. The local algebra $\mathcal{R}_A$  is given by operators located at the left Rindler wedge $\mathcal{W}_A:=\{(t,x,\vec{y})|x< -|t|\}$, which is the causal domain of $A$.
The operators can be constructed by the smeared field $\int d^dx f(x^\mu)\phi(x^\mu)$ with the functions $f$ supported in $\mathcal{W}_A$. Similarly, the algebra $\mathcal{R}_{\bar A}$ is associated with the right wedge $\mathcal{W}_{\bar A}:= \{(t,x,\vec{y})|x>|t| \}$.

There exists an antiunitary operator $J_\Omega$, called modular conjugation, that exchanges the algebras $\mathcal{W}_A$ and $\mathcal{W}_{\bar A}$ according to the Tomita-Takesaki theory. For a given Hermitian operator $\phi(t,x,\vec{y})$, $J_\Omega$ acts as
\bea\label{eq451}
J_\Omega \phi(t,x,\vec{y})J_\Omega =\phi(-t,-x,\vec{y}).
\eea
It has been proved that $J_\Omega=\mathrm{CRT}$, where $\mathrm{C}$ and $\mathrm{T}$ are charge and time reversal operators, and $\mathrm{R}$ is the reflection $x\to -x$ while keeping other coordinates invariant \cite{Witten:2018zxz,Bisognano:1976za}. According to Eq.\eqref{eq451}, for any operator $\mathcal{O}_A\in \mathcal{R}_A$ we can define the operator $\mathcal{O}_{\bar A}:= J_\Omega\mathcal{O}_AJ_\Omega \in \mathcal{R}_{\bar A}$. For any pure state $|\psi\rangle$ there exists $\mathcal{O}_A\in \mathcal{R}_A$ such that $|\psi\rangle$ can be approximated by $\mathcal{O}_A|0\rangle$ by the cyclic property of the vacuum state $|0\rangle$ for the algebra $\mathcal{R}_A$.
Define the transition matrix
\bea\label{QFTexample}
\mathcal{T}^{\mathcal{O}_A}= \frac{\mathcal{O}_A|0\rangle \langle 0|\mathcal{O}_{\bar A} }{\langle 0|\mathcal{O}_A \mathcal{O}_{\bar A} |0\rangle },
\eea
where $\mathcal{O}_{\bar A}= J_\Omega\mathcal{O}_A J_\Omega$. It can be shown the spectra of $\mathcal{T}^{\mathcal{O}_A}_A$ are non-negative. Thus the pseudo R\'enyi entropy is non-negative. To show this result, we need to use the modular theory of QFTs.  By using the modular theory
\bea
\mathcal{T}^{\mathcal{O}_A}=\frac{\mathcal{O}_A|0\rangle\langle 0| \mathcal{O}_{A}^\dagger \Delta_\Omega^{1/2}}{\langle 0| \mathcal{O}_A^\dagger \Delta^{1/2}_\Omega \mathcal{O}_{A}|0\rangle},
\eea
where $\Delta_{\Omega}$ is the modular operator, a positive Hermitian operator. $\mathcal{T}^{\mathcal{O}_A}$ takes the same form as (\ref{general}). Thus it is $\Delta_\Omega^{1/2}$-pseudo-Hermitian. Further, $\Delta_\Omega^{1/2}=e^{-\pi K_A}\otimes e^{\pi K_{\bar A}}$. By using Theorem \ref{corol3} and the fact that $e^{-\pi K_A}=e^{-\pi K_A/2} e^{-\pi K_A/2}$ and $e^{\pi K_{\bar A}}=e^{\pi K_{\bar A}/2}e^{\pi K_{\bar A}/2}$ we conclude the eigenvalues of $\mathcal{T}^{\mathcal{O}_A}_A$ are all positive. In Appendix \ref{detailQFT} we show more general examples with positive pseudo R\'enyi entropy.

One could check the above result by evaluating the pseudo R\'enyi entropy by QFT methods. To move on, let's focus on 2D CFTs \footnote{In Appendix \ref{app:replica}, we give an overview of the replica method to compute the $n$th pseudo R\'enyi entropy in 2D CFTs.}. Consider the transition matrix (\ref{transitionCFT}).
 If $x_1=-x_2$ and $t_1=-t_2$, $J_\Omega \mo(x_1,t_1)J_\Omega=\mo(x_2,t_2)$, thus the pseudo R\'enyi entropy is expected to be real in this case. The results are shown in Figure \textcolor{black}{\ref{pic:Ising}} for models in rational CFTs.
  \begin{figure}[t]
\centering
\includegraphics[width=9cm]{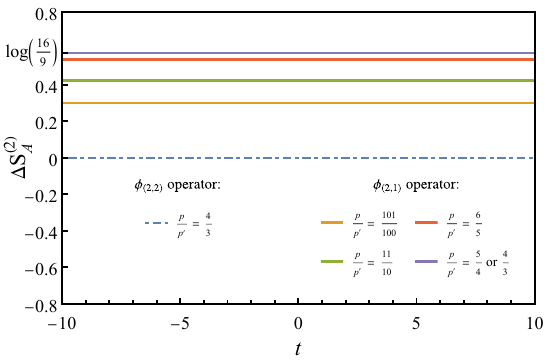}
\caption{The excess of the 2nd pseudo R\'enyi entropy $\Delta S^{(2)}_A$ ($\Delta S^{(2)}_A\equiv S^{(2)}_A-S^{(2)}_{A;vac}$, where $S^{(2)}_{A;vac}$ denotes the 2nd R\'enyi entropy of $A$ when the total system is in the vacuum) of the transition matrix $\T_A\equiv\tr_{\bar{A}}\frac{\mathcal{O}(x,t)|0\rangle\langle0|\mathcal{O}(-x,-t)}{\langle0|\mathcal{O}(-x,-t)\mathcal{O}(x,t)|0\rangle}$ in the minimal models $\mathcal{M}(p,p')$. We study the case of $\mo=\phi_{(2,2)}$ (dot-dashed line) and $\mo=\phi_{(2,1)}$ (solid line), respectively.  One novel feature is that the 2nd pseudo entropy is real and  time-independent. }\label{pic:Ising}
\end{figure}
\subsection{\textcolor{black}{Comment on an exceptional case in the 2-dimensional  free scalar theory} }
In \cite{Nakata:2020luh}, the authors studied the pseudo-R\'enyi entropy of locally excited states in a massless scalar field theory that resides in  two-dimensional Euclidean space.\footnote{In Euclidean space the coordinates are $\tau,x$ or $w=x-i\tau$, $\bar w=x+i\tau$, and the action is given by $S=\int dw d\bar w \partial_w \phi \partial_{\bar w} \phi$.} The  states of interest $|\psi\rangle$ and $|\phi\rangle$ are chosen as
\begin{align}\label{freescalarstates}
|\psi\rangle=e^{-a H} \tilde{\mathcal{O}}(x=0)|0\rangle,\quad
|\phi\rangle=e^{-a' H} \mathcal{O}(x=0)|0\rangle,
\end{align}
where $H$ is the Hamiltonian of CFT, $a$ and $a'$ are cutoff to avoid UV divergence. The inserted operators $\mo$ and $\tilde{\mathcal{O}}$ are defined as
\bea\label{exampleCFT1operator}
\mathcal{O}=e^{\frac{i}{2}\phi}+e^{-\frac{i}{2}\phi},\quad \tilde{\mathcal{O}}:= e^{\frac{i}{2}\phi}+e^{i\theta} e^{-\frac{i}{2}\phi},\quad\big(\theta\in [-\pi,\pi]\big).
\eea
 In \cite{Nakata:2020luh}, the authors showed that the 2nd and 3rd pseudo R\'enyi entropies of $\mathcal{T}^{\psi|\phi}_A$ are real for  subsystem $A$ with finite length, utilizing the replica method and conformal mapping.

To clarify the reality of the pseudo R\'enyi entropy within our framework, it is necessary to construct a transition matrix  that satisfies condition 2 in Theorem \ref{corol3}. However, it should be noted that in this particular case, proving the pseudo-Hermiticity of the transition matrix $\mathcal{T}^{\psi|\phi}$ is a challenging task, as we cannot directly apply Proposition \ref{propoforph} from finite-dimensional quantum mechanics to QFT.

Let us first consider the case of $\theta\neq0$ to show that $\mathcal{T}^{\psi|\phi}$  is pseudo-Hermitian. One first defines the momentum operator \cite{Blumenhagen:2009zz}
\bea{\label{momentumfree}}
\pi_0:= \frac{i}{4\pi}\int dx  \partial_\tau \phi,
\eea
which satisfies the following commutation relation
\bea{\label{commpiV}}
[\pi_0, \mathcal{V}_\alpha]=\alpha \mathcal{V}_\alpha,\quad(\mathcal{V}_\alpha\equiv:e^{i\alpha\phi}:).
\eea
One can also show $\pi_0$ is a Hermitian operator and commutes with Hamiltonian $H$. By using the commutator (\ref{commpiV}) and the Baker–Campbell–Hausdorff formula, we have
\bea{}
e^{\lambda \pi_0} \mathcal{V}_\alpha e^{-\lambda \pi_0}=  e^{\alpha\lambda} \mathcal{V}_\alpha.
\eea
 The transition matrix  $\mathcal{T}^{\psi|\phi}$  then can be written as
\bea{\label{thetanezero}}
\mathcal{T}^{\psi|\phi}=\frac{|\Phi\rangle \langle \Phi| \eta_\Phi}{\langle \Phi|\eta_\Phi |\Phi\rangle},
\eea
with
\bea{\label{phitildeeta}}
\eta_\Phi:=e^{-i\theta \pi_0} e^{-(a'-a)H},\quad
|\Phi\rangle:= e^{-a H }(e^{\frac{i}{2}\phi(0)}e^{-\frac{i}{2}\theta}+e^{-\frac{i}{2}\phi(0)}e^{\frac{i}{2}\theta})|0\rangle.
\eea
We further introduce an operator $\tilde{\mathcal{T}}^{\psi|\phi}$
\begin{align}
\tilde{\mathcal{T}}^{\psi|\phi}:=\frac{(e^{\frac{i}{2}\phi(0)}e^{-\frac{i}{2}\theta}+e^{-\frac{i}{2}\phi(0)}e^{\frac{i}{2}\theta})|0\rangle \langle 0| (e^{\frac{i}{2}\phi(0)}e^{-\frac{i}{2}\theta}+e^{-\frac{i}{2}\phi(0)}e^{\frac{i}{2}\theta})}{2\cos\frac{\theta}{2}{\langle e^{\frac{i}{2}\phi(0)}e^{-(a+a')H}e^{-\frac{i}{2}\phi(0)}\rangle}} e^{-i\theta\pi_0},\quad\mathcal{T}^{\psi|\phi}= e^{-a H} \tilde{\mathcal{T}}^{\psi|\phi} e^{-a'H}.\label{transionunitary}
\end{align}
Since the momentum operator $\pi_0$ is Hermitian, $e^{-i\theta \pi_0}$ is a unitary operator. Then, $\eta_\Phi$ is not a Hermitian operator.
The transition matrix (\ref{thetanezero}) seems to be different from the general form
of the pseudo-Hermitian operator (\ref{Tpseudo}). It doesn't mean the transition matrix cannot be pseudo-Hermitian.
The transition matrix $\tilde{\mathcal{T}}^{\psi|\phi}\propto|\Psi\rangle \langle \Psi|U$ with the unitary operator $U:=e^{-i\theta\pi_0}$ and the state $|\Psi\rangle:=(e^{\frac{i}{2}\phi(0)}e^{-\frac{i}{2}\theta}+e^{-\frac{i}{2}\phi(0)}e^{\frac{i}{2}\theta})|0\rangle$. $\tilde{\mathcal{T}}^{\psi|\phi}$ can be $\eta$-pseudo-Hermitian if the unitary operator $U$ and the state $|\Psi\rangle$ satisfy certain constraints.
It is necessary that $\eta$, $U$, and $|\Psi\rangle$ should satisfy that $U \eta|\Psi\rangle=|\Psi\rangle$.\footnote{Please refer to Appendix \ref{Detailoffreescalar} for more details.} Equivalently, the operator $\eta':=U\eta$ should satisfy the condition (\ref{cond1})  (\ref{cond2})  or a weaker condition (\ref{constraintsetaU}).
One could show the transition matrix $\mathcal{T}^{\psi|\phi}$ is $\eta e^{-(a'-a)H}$-pseudo-Hermitian if $\tilde{\mathcal{T}}^{\psi|\phi}$ is $\eta$-pseudo-Hermitian and $\eta$ commutes with $H$.  Assume $\tilde{\mathcal{T}}^{\psi|\phi}$ is $\eta$-pseudo-Hermitian, we have
\bea{\label{tildeeta}}
\eta\tilde{\mathcal{T}}^{\psi|\phi}\eta^{-1}=\tilde{\mathcal{T}}^{\psi|\phi\dagger}.
\eea
By using the above result and the definition \eqref{transionunitary} one could show
\bea{}
&&(\eta e^{-(a'-a)H)})\mathcal{T}^{\psi|\phi}(\eta e^{-(a'-a)H)})^{-1}\nn \\
&&=\eta e^{-a'H}\tilde{\mathcal{T}}^{\psi|\phi}e^{-a H}\eta^{-1}\nn \\
&&=e^{-a'H} (\tilde{\mathcal{T}}^{\psi|\phi})^\dagger e^{-aH}={\mathcal{T}^{\psi|\phi}}^\dagger,
\eea
where in the second step we use $\eta$ commutes with $H$ and (\ref{tildeeta}). It is expected that there exists an operator $\eta'$ satisfying the condition (\ref{constraintsetaU}), or equally $\eta' \pi_0 \eta'=-\pi_0 $. A candidate for $\eta'$ is the time reflection operator $\Theta$ \cite{Osterwalder:1973dx,Osterwalder:1974tc}, which gives the transformation $\Theta \phi(\tau,x)\Theta =\phi(-\tau,x)$ and $ \Theta \partial_\tau \phi(\tau)\Theta =-\partial_\tau \phi(\tau)$. Thus we have $\Theta \pi_0\Theta =-\pi_0$. Further, one should note that $\Theta$ commutes with $H$.

We next consider the simpler case $\theta=0$. One can readily find that the transition matrix $\mathcal{T}^{\psi|\phi}$ in this case is  $e^{-(a'-a)H}$-pseudo-Hermitian.

It seems that our analysis can only come to a  standstill here.  If we want to continue applying Theorem \ref{corol3}, we are faced with the problem of decomposing $e^{-(a'-a)H}$ into the tensor product of operators in $A$ and $\bar A$, regardless of whether $\theta$ is zero or not. A naive speculation could be $e^{-(a'-a)H}=e^{-(a'-a)H_{A}}\otimes e^{-(a'-a)H_{\bar A}}$, where $H_{A(\bar A)}=\int_{A(\bar A)} dx T_{00}$. \textcolor{black}{However, if one takes the QFT as a lattice model, this decomposition would be ambiguous.  It seems there exists some interaction term between $A$ and $\bar A$\footnote{We thank the anonymous referee for pointing this out to us.}.} We could consider a lattice-regularized massless free scalar to see this clearly. The lattice-regularized Hamiltonian of it is given by
\begin{align}
H=\sum_{n=-\infty}^{+\infty}\frac{1}{2}a\left\{\dot{\phi}_n^2+\frac{1}{a^2}(\phi_{n+1}-\phi_n)^2\right\},\quad (\dot{\phi}_n\equiv\frac{d}{dt}\phi_n),\label{explain H}
\end{align}
where $a$ is the lattice size which is so-called regulator.
From Eq.\eqref{explain H}, it is evident that the term $(\phi_{n+1}-\phi_{n})^2/a^2$ contributes to the interaction term at the boundary between $A$ and $\bar A$. For instance, in the case where $A=[0,\infty)$, the interaction term takes the form $H_{\text{int}}=\frac{1}{2a}(\phi_{0}-\phi_{-1})^2$, where $\phi_0\equiv\phi(x=0)\in A,~\phi_{-1}\equiv\phi(x=-a)\in\bar{A}$. Thus, it cannot be decomposed $H$ into $H_A\otimes H_{\bar A}$ naively in the lattice model. \textcolor{black}{In the continuous limit $a\to 0$, it seems the interaction term becomes a boundary term. But it is hard to estimate whether this boundary term would be important for our purpose. For the lattice model with interactions, the problem would be more subtle. }

On the other hand, we would like to emphasize that the $\eta$ matrix that ensures the transition matrix satisfies the pseudo-Hermiticity condition $\eqref{phc}$ is not unique. Our results only provide one possibility, and \textcolor{black}{the question of whether a decomposable $\eta$ matrix exists is an intriguing one that deserves further investigation.}

\section{Discussion}
Real-valued pseudo entropies have robust potential applications in holography, quantum information, and quantum many-body physics. In this article, we explore the reality condition of pseudo entropy utilizing the notion of pseudo-Hermiticity. \textcolor{black}{We have extended the concept of transition matrices from pure states to more general mixed states \eqref{defofX}, and have demonstrated that all finite-dimensional transition matrices are pseudo-Hermitian (see Proposition \ref{propoforph}).} We derive the equivalent condition for the reduced transition matrix to be pseudo-Hermitian (see Proposition \ref{propo2}). On this basis, we construct a class of transition matrices that are $\eta$-pseudo-Hermitian and  have non-negative eigenvalues (not all zeros) by selecting special $\eta$ matrices (see Theorem \ref{corol3} about pure state transition matrices and
Corollary \ref{corol4} about mixed state transition matrices).  Thus, we construct a set of transition matrices that  can generate non-negative pseudo (R\'enyi) entropy of arbitrary order of  subsystems.

Based on our constructions, we perform a series of numerical tests in finite dimensional quantum systems. We subsequently work out several non-trivial examples in quantum field theories, including Euclidean QFT with non-Hermitian operator insertion, and
QFTs with Lorentzian  signature. The core idea is to find an appropriate $\eta$ matrix so that the transition matrix is $\eta$-pseudo-Hermitian and satisfy the conditions of Theorem \ref{corol3}.  In terms of these examples, we learn that the $\eta$-pseudo-Hermiticity is associated with the various aspects, e.g., algebraic structures of QFT, modular Hamiltonian, parity, etc.


The notion of pseudo-Hermiticity has been extensively studied for the non-Hermitian systems. It originates from non-Hermitian matrices, which are diagonalizable and own a complete biorthonormal eigenbasis. In the current work, we mainly focus on the pseudo-Hermiticity of the reduced transition matrix to construct the real-valued pseudo entropy. If the reduced transition matrix is non-pseudo-Hermitian which means it is not diagonalizable, one possible way to find the reality condition of pseudo entropy is writing the matrix into Jordan form for the non-diagonalizable cases \cite{Mostafazadeh:2002wg,Ashida:2020dkc}. It is an interesting direction to achieve the real-valued pseudo entropy by going beyond the pseudo-Hermiticity.
\subsection*{Acknowledgements}
We would like to thank Jun-Peng Cao, Pak Hang Chris Lau and Long Zhao for valuable discussions related to this work.
WZG is supposed by the National Natural Science Foundation of China under Grant No.12005070 and the Fundamental Research
Funds for the Central Universities under Grants NO.2020kfyXJJS041. SH would like to appreciate the financial support from Jilin University, Max Planck Partner Group, and the Natural Science Foundation of China Grants (No.12075101, No.12235016).
\appendix
\section{ Non-pseudo-Hermitian $X_A$ with real $\tr[(X_A)^n]$}\label{feiduijiao}
Let us consider a 4 qubits system $\mathcal{S}$ (2 qubits each for $A$ and $\bar{A}$) and a transition matrix $X=\frac{|\psi\rangle\langle\phi|}{\langle\phi|\psi\rangle}$ acting on its Hilbert space $\mathcal{H_S}\equiv H_A\otimes H_{\bar{A}}$, where $|\psi\rangle$ and $|\phi\rangle$ are two non-orthogonal quantum states living in $H_{\mathcal{S}}$,
\begin{align}
|\psi\rangle=&\frac{1}{2}|00\rangle_A|00\rangle_{\bar{A}}+\frac{1}{2}|01\rangle_A|01\rangle_{\bar{A}}+\frac{1}{2}|10\rangle_A|10\rangle_{\bar A}+\frac{1}{2}|11\rangle_A|11\rangle_{\bar A},\nn\\
|\psi_{\perp}\rangle=&\frac{i}{4}|00\rangle_A|00\rangle_{\bar{A}}+\frac{i}{4}|01\rangle_A|01\rangle_{\bar{A}}-\frac{i}{4}|10\rangle_A|10\rangle_{\bar A}-\frac{i}{4}|11\rangle_A|11\rangle_{\bar A}+\frac{\sqrt{3}}{2}|11\rangle_A|10\rangle_{\bar{A}},\nn\\
|\phi\rangle\equiv&\frac{\sqrt{2}}{2}|\psi\rangle+\frac{\sqrt{2}}{2}|\psi_{\perp}\rangle,\quad\quad(\langle\phi|\phi\rangle=\langle\psi|\psi\rangle=\langle\psi_{\perp}|\psi_{\perp}\rangle=1,~\langle\psi|\psi_{\perp}\rangle=0).
\end{align}
The reduced transition matrix of the subsystem $A$, obtained by tracing out the d.o.f. of $\bar{A}$, is given by
\begin{align}
X_A\equiv&\tr_{\bar{A}}X\nn\\
=&\Big(\frac{1}{4}-\frac{i}{8}\Big)\big(|00\rangle_A\mathbin{_A\langle}00|+|01\rangle_A\mathbin{_A\langle}01|\big)+\Big(\frac{1}{4}+\frac{i}{8}\Big)\big(|10\rangle_A\mathbin{_A\langle}10|+|11\rangle_A\mathbin{_A\langle}11|\big)+\frac{\sqrt{3}}{4}|01\rangle\langle11|.\label{app1:TA}
\end{align}
Building on \eqref{app1:TA}, it's more useful to write down the matrix formulation of $X_A$,
\be\label{matrixta}
X_A=\left(
       \begin{array}{cccc}
         \frac{1}{4}-\frac{i}{8} &   &   &   \\
           & \frac{1}{4}-\frac{i}{8} &   &   \\
           &   & \frac{1}{4}+\frac{i}{8} & \frac{\sqrt{3}}{4}  \\
           &   &   & \frac{1}{4}+\frac{i}{8} \\
       \end{array}
     \right),
\ee
which is an upper triangular $4\times4$ matrix and cannot be diagonalized. It can be found from \eqref{matrixta} that  the eigenvalues of $X_{A}$ consist of two complex conjugate pairs, which renders $\tr[(X_A)^n]$ real. On the other hand, we have
\be
X_A^{\dagger}=\left(
       \begin{array}{cccc}
         \frac{1}{4}+\frac{i}{8} &   &   &   \\
           & \frac{1}{4}+\frac{i}{8} &   &   \\
           &   & \frac{1}{4}-\frac{i}{8} &   \\
           &   & \frac{\sqrt{3}}{4}  & \frac{1}{4}-\frac{i}{8} \\
       \end{array}
     \right).
\ee
Although $X_A^{\dagger}$ has the same eigenvalues as $X_A$,  they are not similar. This is because they have different Jordan standard forms,  which is read from the fact that the Jordan blocks of the same eigenvalue of two matrices are different. Therefore, we know that $X_A$ is non-pseudo-Hermitian.
\section{\textcolor{black}{A proof of Theorem \ref{corol3}}}\label{proofoftheo3}
In this appendix, we show that $\eta_A^{1/2}\tilde{\mathcal{T}}_A\eta_A^{1/2}$ is positive semi-definite and not null in the following four cases: $\mathfrak{a}$) $\eta_A$ and $\eta_{\bar{A}}$ are both positive definite; $\mathfrak{b}$) $\eta_A$ and $\eta_{\bar{A}}$ are both negative definite; $\mathfrak{c}$) $\eta_A$ is positive definite and $\eta_{\bar{A}}$ is negative definite; $\mathfrak{d}$) $\eta_A$ is negative definite and $\eta_{\bar{A}}$ is positive definite.
\paragraph{Case $\mathfrak{a}$ :}  Since $\eta_A$ and $\eta_{\bar{A}}$ are both positive definite, we first have $\langle\psi|\eta|\psi\rangle=\langle\psi|\eta_A\otimes\eta_{\bar A}|\psi\rangle>0$ for any $|\psi\rangle\neq0$.\footnote{Note that the eigenvalues of $\eta$ consist of  the product of the respective eigenvalues of $\eta_A$ and $\eta_{\bar A}$.} We also know that $\eta_A^{1/2}$ and $\eta_{\bar A}^{1/2}$ are both positive definite. Therefore, we find that $\tilde{\mathcal{T}}_{A}$ \eqref{ttaat} is positive semi-definite and not null.\footnote{Note that $\tilde{\mathcal{T}}_{A}$ can have zero eigenvalues, depending on the choice of $|\psi\rangle$.} It follows that $\eta_A^{1/2}\tilde{\mathcal{T}}_A\eta_A^{1/2}$ is positive semi-definite and not null.

\paragraph{Case $\mathfrak{b}$ :} We first have $\langle\psi|\eta|\psi\rangle>0$ for any $|\psi\rangle\neq0$. Since $(-\eta_{\bar A})^{1/2}$ is positive definite in this case, $\tilde{\mathcal{T}}_{A}$ can be rewritten as
\begin{align}
\tilde{\mathcal{T}}_{A}=\frac{\tr_{\bar A}\left[ |\psi\rangle \langle \psi| \eta_{\bar A}\right]}{\langle \psi| \eta |\psi\rangle}=-\frac{\tr_{\bar A}\left[(-\eta_{\bar A})^{1/2} |\psi\rangle \langle \psi| (-\eta_{\bar A})^{1/2}\right]}{\langle \psi| \eta |\psi\rangle}\label{eq52ta}.
\end{align}
 From above we conclude that $\tilde{\mathcal{T}}_{A}$ is negative semi-definite and not null. On the other hand, the square root of $\eta_A$ can be rewritten as $\eta_A^{1/2}=i(-\eta_{A})^{1/2}$, where $(-\eta_{A})^{1/2}$ is positive definite.
 It follows that $\eta_A^{1/2}\tilde{\mathcal{T}}_A\eta_A^{1/2}=(-\eta_A)^{1/2}(-\tilde{\mathcal{T}}_A)(-\eta_A)^{1/2}$ is positive semi-definite and not null.

 \paragraph{Case $\mathfrak{c}$:} In the third case, we first know that $\langle\psi|\eta|\psi\rangle<0$ for any $|\psi\rangle\neq0$ and $(-\eta_{\bar A})^{1/2}$ is positive definite, which leads to $\tilde{\mathcal{T}}_{A}$ \eqref{eq52ta} being  positive semi-definite and not null. Since $\eta_{\bar A}^{1/2}$ is positive definite,  $\eta_A^{1/2}\tilde{\mathcal{T}}_A\eta_A^{1/2}$ is positive semi-definite and not null.

 \paragraph{Case $\mathfrak{d}$:} In the last case, we first have $\langle\psi|\eta|\psi\rangle<0$ for any $|\psi\rangle\neq0$, which leads to $\tilde{\mathcal{T}}_{A}$ \eqref{ttaat} being negative semi-definite and not null. Then $\eta_A^{1/2}\tilde{\mathcal{T}}_A\eta_A^{1/2}=(-\eta_A)^{1/2}(-\tilde{\mathcal{T}}_A)(-\eta_A)^{1/2}$ is positive semi-definite and not null.

In summary, we show that  $\eta_A^{1/2}\tilde{\mathcal{T}}_A\eta_A^{1/2}$ is always positive semi-definite and not null.
\section{A proof of Theorem \ref{corol2}}\label{corollary2}
The result of Theorem \ref{corol2} follows from the spectral decomposition of pseudo-Hermitian matrices. For any diagonalizable $\eta$-pseudo-Hermitian matrix $M$, we can write the spectral decomposition of $M$ as
\bea
M=\sum_{i} \lambda_{0,i} |\psi_{0,i} \rangle \langle \phi_{0,i} |+\sum_{j}\Big(\lambda_{+,j} |\psi_{+,j} \rangle \langle \phi_{+,j}|+\lambda_{-,j} |\psi_{-,j} \rangle \langle \phi_{-,j}|\Big),
\eea
where $\lambda$, $|\psi\rangle$ and $\langle\phi|$ represent the eigenvalue, right eigenvector, and left eigenvector of $M$, respectively \footnote{We use the subscript $0$ to stand for real eigenvalues and the corresponding basis eigenvectors and the subscript $\pm$ to stand for the complex eigenvalues with $\pm$ imaginary part and the corresponding basis eigenvectors.}. Since we can always choose the biorthonormal eigenbasis  such that
 \bea
 |\phi_{0,i}\rangle =\eta |\psi_{0,i}\rangle, \quad |\phi_{{\pm},j}\rangle=\eta| \psi_{{\mp},j}\rangle,\quad \langle\phi_{\alpha,i}|\psi_{\beta,j}\rangle=\delta_{\alpha\beta}\delta_{ij},~(\alpha,\beta\in\{0,\pm\})
 \eea
 hold \cite{Mostafazadeh:2001jk}, the spectrum decomposition becomes
 \begin{align}
 M =&\sum_{i} \lambda_{0,i} |\psi_{0,i} \rangle \langle \psi_{0,i} |\eta+\sum_{j}\left(\lambda_{+,j} |\psi_{+,j} \rangle \langle \psi_{-,j}|\eta+\lambda_{-,j}|\psi_{-,j} \rangle \langle \psi_{+,j}|\eta\right)\nonumber \\
 =&\sum_{i} \lambda_{0,i} |\psi_{0,i} \rangle \langle \psi_{0,i} |\eta \nonumber \\
 &+\sum_{j}\lambda^R_{+,j} \Big[(|\psi_{+,j} \rangle +|\psi_{-,j}\rangle)(\langle \psi_{-,j}|+  \langle \psi_{+,j}|)\eta-|\psi_{-,j}\rangle\langle \psi_{-,j}|\eta- |\psi_{+,j}\rangle\langle \psi_{+,j}|\eta\Big]\nonumber \\
 &+\sum_{j}\lambda^I_{+,j} \Big[(|\psi_{+,j} \rangle -i |\psi_{-,j}\rangle)(\langle \psi_{+,j}|+i \langle \psi_{-,j}|)\eta-|\psi_{-,j}\rangle\langle \psi_{-,j}|\eta- |\psi_{+,j}\rangle\langle \psi_{+,j}|\eta\Big]\nonumber\\
 =&\sum_{i} \lambda_{0,i} |\psi_{0,i} \rangle \langle \psi_{0,i} |\eta+\sum_{j}\lambda^R_{+,j} (|\psi_{+,j} \rangle +|\psi_{-,j}\rangle)(\langle \psi_{-,j}|+  \langle \psi_{+,j}|)\eta\nn\\
 &-\sum_j(\lambda^R_{+,j}+\lambda^I_{+,j})|\psi_{-,j}\rangle\langle \psi_{-,j}|\eta-\sum_j(\lambda^R_{+,j}+\lambda^I_{+,j}) |\psi_{+,j}\rangle\langle \psi_{+,j}|\eta\nonumber \\
 &+\sum_{j}\lambda^I_{+,j}(|\psi_{+,j} \rangle -i |\psi_{-,j}\rangle)(\langle \psi_{+,j}|+i \langle \psi_{-,j}|)\eta,
 \end{align}
 where $\lambda_{+,j}^R$ and $\lambda_{+,j}^I$ are the real and imaginary  part of $\lambda_{+,j}$, respectively.
 Note that every term in summations is $\eta$-pseudo-Hermitian.
 \section{Brief review of modular theory in QFTs }\label{algebricreview}

For any given open subsystem $A$ in spacetimes, the local algebra $\mathcal{R}_A$ consists of all the operators supported in $A$. The algebra can also be associated with the domain of dependence of $A$, denoted by $\mathcal{D}(A)$. {The reason is that the operators located in $\mathcal{D}(A)$  can be determined by the ones in $A$ according to the dynamical time evolution of the theory. If $A'$ is another subsystem that is spacelike with $A$, we expect the operators in $A'$ would commute with the ones in $A$, that is $[\mathcal{R}_A,\mathcal{R}_{A'}]=0$. }

Denote the algebra associated with the whole spacetime as $\mathcal{R}$.
The full Hilbert space $\mathcal{H}_0$ could be constructed by acting the operators in $\mathcal{R}$ on the vacuum state $|0\rangle$. The Reeh-Schlieder theorem says that the set $\{ a|0\rangle, a\in \mathcal{R}_A\}$  is also dense in $\mathcal{H}_0$. For any given state $|\psi\rangle$, the theorem means that there exist operator $a \in \mathcal{R}_A$ such that $a|0\rangle$ can be arbitrarily close to $|\psi\rangle $. Thus we could construct the transition matrix $\mathcal{T}^{\psi|\phi}$ only by using the operators located in a subsystem. The above results can also be generalized to any cyclic state $|\Psi\rangle$.

The Tomita operator $S_\Psi$ for the state $|\Psi\rangle$ is antilinear and satisfies
\bea\label{Smodular}
S_\Psi a|\Psi\rangle =a^\dagger |\Psi\rangle,
\eea
for any $a\in \mathcal{R}_A$. By definition it is obvious that $S_\Psi^2=1$. $S_\Psi$ has a unique polar decomposition
\bea
S_\Psi=J_\Psi \Delta_\Psi^{1/2},
\eea
where $J_\Psi$ is antiunitary, $\Delta_\Psi^{1/2}$ is a positive Hermitian operator. $J_\Psi$ is called the modular conjugation satisfying $J_\Psi^2=1$ and $J_\Psi^\dagger =J_\Psi$. $\Delta_\Psi $ is the modular operator associated with $\mathcal{R}_A$ and $|\Psi\rangle$. Similarly, one could define the modular operator $\bar S_{\Psi}$ associated with $\mathcal{R}_{\bar A}$. By using $S_\Psi^2=1$ we have
\bea\label{Jdeltarelation}
J_\Psi \Delta_\Psi^{1/2} J_\Psi=\Delta_{\Psi}^{-1/2}.
\eea
It can be shown that
\bea
\bar S_\Psi=S_\Psi^\dagger=\Delta_\Psi^{1/2} J_\Psi=J_\Psi \Delta^{-1/2}.
\eea

Consider the $d$-dimensional Minkowski spacetime. The metric is $ds^2=-dt^2+dx^2+d\vec{y}^2$, where $\vec{y}$ are coordinates of $(d-2)$-dimensional Euclidean space. Let the subsystem $A$ be $x>0$. The domain of dependence of $A$ is known as the Rindler wedge $\mathcal{W}_A$. For the vacuum state $|0\rangle$ the modular conjugation $J_\Omega$ is given by
\bea
J_\Omega=\mathrm{CRT},
\eea
which is first proved by Bisognano and Wichmann \cite{Bisognano:1976za}. The modular operator $\Delta_\Omega$ can be formly written as
\bea
\Delta_\Omega=\rho_{A}\otimes \rho^{-1}_{\bar A},
\eea
where $\rho_{A}:= e^{-2\pi K_A}$ and $\rho_{\bar A}:= e^{-2\pi K_{\bar A}}$ are the reduced density matrices of $A$ and $\bar A$. $K_A$ and $K_{\bar A}$ are known as the modular Hamiltonian of $A$ and $\bar A$. The density matrices are positive operators. For any positive function $f(x)$, the operators $f(\rho_A)$ or $f(\rho_{\bar A})$ are also positive. For example, one could define the operator $\rho_A^{1/2}= e^{-\pi K_A}$, $\rho_A^{1/4}= e^{- \pi K_A/2}$.
It is obvious the modular operator $\Delta_{\Omega}=e^{-2\pi (K_A-K_{\bar A})}$ is a positive Hermitian operator.

 For the Rindler wedge $K_A$ and $K_{\bar A}$ are associated with the Lorentz  boost generators
\bea
&&K_A=\int_{t=0,x\ge 0}dx d^{d-2}y x T_{00},\nn\\
&&K_{\bar A}=-\int_{t=0,x\le 0}dx d^{d-2}y x T_{00}.
\eea
For any Hermitian operator $\mo(t,x,\vec{y})$, according to the definition of $S_{\Omega}$ we have
\bea
S_{\Omega}\mo(t,x,\vec{y})|0\rangle =\mo(t,x,\vec{y})|0\rangle,
\eea
which leads to
\bea\label{JDeltarelation}
\Delta_{\Omega}^{1/2}\mo(t,x,\vec{y})|0\rangle=J_{\Omega}\mo(t,x,\vec{y})|0\rangle|0\rangle =\mo(-t,-x,\vec{y})|0\rangle,\nn\\
~
\eea
where we use the fact $J_\Omega^2=1$ and $J_\Omega|0\rangle =|0\rangle$.
\section{Details of the example in QFTs}\label{detailQFT}

The transition matrix (\ref{QFTexample}) is related to the operators $\mathcal{O}_A$ and $\mathcal{O}_{\bar A}$. In Figure \ref{OAplot}, we show the positions of the two operators.
\begin{figure}[t]
\centering
\includegraphics[width=9.0cm]{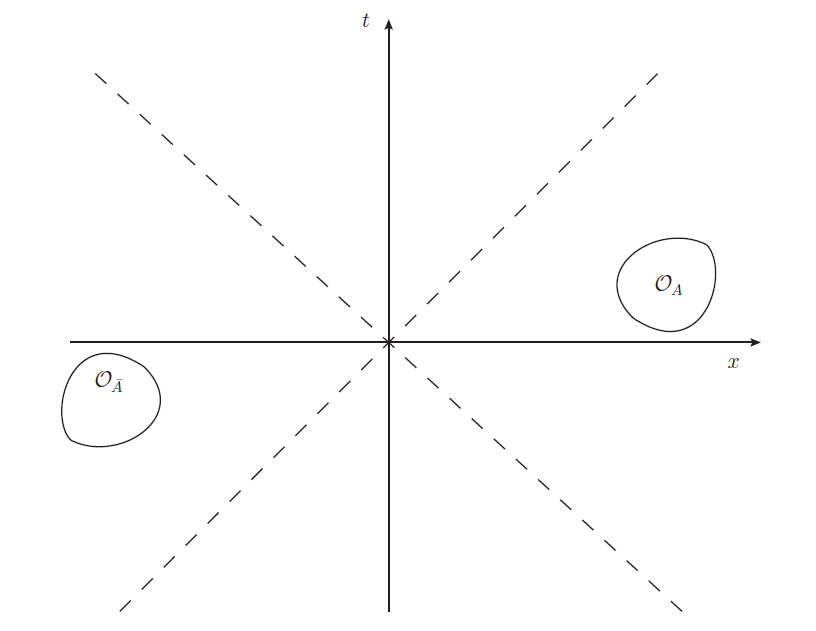}
\caption{Illustration of the operators $\mathcal{O}_A$ and $\mathcal{O}_{\bar A}$.}\label{OAplot}
\end{figure}
Taking $\mathcal{O}_{\bar A}$ into (\ref{QFTexample}) we obtain
\bea\label{step1}
\mathcal{T}^{\mathcal{O}_A}= \frac{\mathcal{O}_A|0\rangle \langle 0|\mathcal{O}_{ A}J_\Omega }{\langle 0|\mathcal{O}_A J_\Omega\mathcal{O}_{ A} |0\rangle }.
\eea
By the definition of Tomita operator we have
\bea
&&S_{\Omega}\mathcal{O}^\dagger_{A}|0\rangle=J_\Omega \Delta_\Omega^{1/2}\mathcal{O}_{A}|0\rangle\nn \\
&&\phantom{S_{\Omega}\mathcal{O}^\dagger_{A}|0\rangle}=J_\Omega \Delta_\Omega^{1/2}J_\Omega J_\Omega\mathcal{O}^\dagger_{A}|0\rangle\nn \\
&&\phantom{S_{\Omega}\mathcal{O}^\dagger_{A}|0\rangle}= \Delta_\Omega^{-1/2} J_\Omega\mathcal{O}^\dagger_{A}|0\rangle,
\eea
where in the second step we use the fact $J_\Omega^2=1$,in the third step we use (\ref{Jdeltarelation}). Therefore,  we have
\bea
J_\Omega \mathcal{O}_{A}^\dagger|0\rangle =\Delta_{\Omega}^{1/2}\mathcal{O}_A|0\rangle.
\eea
The transition matrix (\ref{step1}) is reduced to
\bea
\mathcal{T}^{\mathcal{O}_A}= \frac{\mathcal{O}_A|0\rangle \langle 0|\mathcal{O}^\dagger_{ A}\Delta_\Omega^{1/2} }{\langle 0|\mathcal{O}^\dagger_A \Delta_\Omega^{1/2}\mathcal{O}_{ A} |0\rangle }.
\eea
In the main text, we only discuss the special case (\ref{transitionCFT}), for which the eigenvalues of $\mathcal{T}^{\mathcal{O}_A}_A$ are positive real. More generally, one could choose $\mathcal{O}_A=\sum_j C_j \mo_j(x_1,t_1)$, where $\mo_j$ are Hermitian operators (not necessarily be primary), $C_j$ are arbitrary constants. By $\mathcal{O}_{\bar A}=J_\Omega \mathcal{O}_{\bar A} J_{\Omega}$ we have $\mathcal{\mo}_{\bar A}=\sum_j C_j^* \mo_j(-x_1,-t_1)$. It is expected that the pseudo R\'enyi entropy for the transition matrix associated with these operators will also be real.

The result of Theorem \ref{theor2} can be used to construct the $\eta$-pseudo-Hermitian transition matrix in QFTs. Assume $|\Psi\rangle$ is a cyclic state for the algebra $\mathcal{R}_A$. The general $\eta$-pseudo-Hermitian transition matrices in QFTs are
\bea
\mathcal{T}^{\mathcal{O}_A}=\frac{\mathcal{O}_A|\Psi\rangle \langle \Psi| \mathcal{O}_A^\dagger \eta}{\langle \Psi| \mathcal{O}_A^\dagger \eta \mathcal{O}_A|\Psi\rangle}.
\eea
If $\eta$ is taken to be identity, the transition matrix reduces to  the Hermitian case.
By using (\ref{Smodular}) one could rewrite the above formula as
\bea
&&\mathcal{T}^{\mathcal{O}_A}=\frac{\mathcal{O}_A|\Psi\rangle \langle \Psi| \mathcal{O}_A S_{\Psi}^\dagger \eta}{\langle \Psi| \mathcal{O}_A^\dagger \eta \mathcal{O}_A|\Psi\rangle}\nn\\
&&\phantom{\mathcal{T}^{\mathcal{O}_A}}=\frac{\mathcal{O}_A|\Psi\rangle \langle \Psi| \mathcal{O}_A J_\Omega \Delta_\Psi^{-1/2} \eta}{\langle \Psi| \mathcal{O}_A^\dagger \eta \mathcal{O}_A|\Psi\rangle}\nn\\
\eea
This paper only focuses on the vacuum state $|0\rangle$.
Our example (\ref{step1}) is a special case $\eta=\Delta_{\Omega}^{1/2}$. In general, one could choose $\eta=\eta_A\otimes \eta_{\bar A}$, where $\eta_{A}$ and $\eta_{\bar A}$ are invertible positive operators. Using Theorem \ref{corol3}, one could show that the pseudo R\'enyi entropy is also real in this case.
\section{Finite dimension example}
Assume the Hilbert space $\mathcal{H}=\mathcal{H}_A\otimes \mathcal{H}_{\bar A}$, the dimension of $\mathcal{H}_{A(\bar A)}$ is $d$.
(\ref{general}) provides us a way to generate the $\eta$-pseudo-Hermitian transition matrices with $\eta=\eta_{A}\otimes \eta_{\bar A}$. One could arbitrarily  choose the reference state $|\Psi\rangle$, i.e., the coefficients $c_k$ and the operators $a$, $\eta_{A(\bar A)}$. With a given basis $|k\rangle_{A(\bar A)}$, we have the expansion
\bea\label{matrices}
&&a=\sum_{ij}a_{ij}|i\rangle_A ~_A\langle j|,\nn\\
&&\eta_{A}=\sum_{m,n}\eta_{mn}|m\rangle_A ~_A\langle n|,\nn \\
&&\eta_{\bar A}=\sum_{ m, n}\bar{\eta}_{ m n}|m\rangle_{\bar A} ~_{\bar A}\langle n|.
\eea
The matrices $\eta_{mn}$ and $\bar \eta_{mn}$ should be Hermitian and invertible.

In finite dimension, it is easy to show the Reeh-Schlieder theorem. Any state $|\psi\rangle$ can be constructed by only local operations on $A$ or $\bar A$. The reference state $|\Psi\rangle:=\sum_k c_k |k\rangle_A \otimes |k\rangle_{\bar A}$ is cyclic if the coefficients $c_k$ are all non-vanishing. For any given state $|\psi\rangle$, we can expand it as
\bea
|\psi\rangle =\sum_{i,j} \psi_{ij}|i\rangle_A \otimes |j\rangle_{\bar A}.
\eea
It is enough to show that the basis  $|i\rangle_A\otimes |j\rangle_{\bar A}$ of $\mathcal{H}$ can be obtained only by local operations on $|\Psi\rangle$. One could achieve this by acting an operator $|i\rangle_{A}~_A\langle j|$ on $|\Psi\rangle$.

Taking (\ref{matrices}) into (\ref{general}) one could obtain the transition matrix $\mathcal{T}^a$. Eq.(\ref{FiniteTA}) can be obtained by partial trace $tr_{\bar A} \mathcal{T}^a:= \sum_k 	~_{\bar A}\langle k| \mathcal{T}^a |k\rangle_{\bar A}$.
One could generate random matrices $a_{ij}$, $\eta_{mn}$ and $\bar \eta_{mn}$ by software, e.g., Mathematica. Then we can construct the matrices $\mathcal{T}^a_A $ (\ref{FiniteTA}) and evaluate the eigenvalues of them.
 According to Corollary \ref{corol4},  the transition matrices by linear combinations of $\mathcal{T}^a$ can also have positive eigenvalues.

 We have the following three different cases.\\
\textit{Case I}: $\eta_{A(\bar A)}$ is Hermitian and invertible matrices.  \\
Generally, in this case, the eigenvalues are expected to come in complex conjugate pairs or be real. \\
\textit{Case II}: $\eta_{A(\bar A)}=\mo_{A(\bar A)}\mo^\dagger_{A(\bar A)}$. $\mo_{A(\bar A)}$ is an arbitrary invertible operator. \\
The eigenvalues, in this case, are expected to be real and positive. By considering the normalization of $\mathcal{T}^a_A$ the eigenvalues should belong to $[0,1]$. Thus the pseudo R\'enyi entropy should be real. \\
\textit{Case III}: The linear combinations of $\mathcal{T}^{a^I}$,
\bea
\mathcal{T}:= \sum_I x_I\mathcal{T}^{a^I},
\eea
where $x_I$ are positive numbers satisfying $\sum_I x_I =1$, $\mathcal{T}^{a^I}$ is $\eta_{A}\otimes \eta_{\bar A}$-pseudo-Hermitian transition matrices with  $\eta_{A(\bar A)}=\mo_{A(\bar A)}\mo^\dagger_{A(\bar A)}$.
In this case, the eigenvalues of $\mathcal{T}_A$ are positive.
\subsection{Numerical result with $d=3$}\label{app:E.1}
We show examples for these three cases in the main text, obtained by randomly choosing the matrices. In the following, we would like to show an example with $d=3$.\\
\textit{Case I}. The matrices $a_{ij}$, $\eta_{mn}$ and $\bar \eta_{mn}$ are randomly generated by  Mathematica,
\bea
&&\begin{footnotesize}
\eta_{A}
=\left(
\begin{array}{ccc}
 -12.7085 & 24.1113\, +2.50006 i & 12.752\, -7.64134 i \\
 24.1113\, -2.50006 i & -34.9796 & 16.159\, +12.3798 i \\
 12.752\, +7.64134 i & 16.159\, -12.3798 i & 6.06277 \\
\end{array}
\right)\nn
\end{footnotesize} \nn\\
&&\begin{footnotesize}
\eta_{\bar A}=\left(
\begin{array}{ccc}
 -18.2979& -5.89479-25.5118 i & 5.61273\, +21.1508 i \\
 -5.89479+25.5118 i & -32.9428& 12.504\, -10.931 i \\
 5.61273\, -21.1508 i & 12.504\, +10.931 i & -25.1785
\end{array}
\right)\nn
\end{footnotesize}  \nn\\
&&\begin{footnotesize}
a=\left(
\begin{array}{ccc}
 17.5055\, -19.3962 i & 8.29301\, +13.4073 i & -13.3458+5.79992 i \\
 -2.34212+12.1545 i & -19.0161+8.64625 i & 17.1027\, +20.3801 i \\
 -5.46605-24.0534 i & -0.924333+21.9112 i & -19.6201+22.0798 i \\
\end{array}
\right)\nn
\end{footnotesize}
\eea
The reference state $|\Psi\rangle =\frac{1}{\sqrt{3}}\sum_{k=1}^3|k\rangle_A|k\rangle_{\bar A}$. One could evaluate the reduced transition matrix $\mathcal{T}^a_{A}$ by using (\ref{FiniteTA}), it is given by
 \bea
\begin{footnotesize}
\mathcal{T}^a_{A}=\left(
\begin{array}{ccc}
 1.06475\, +0.82173 i & -2.58944-1.89593 i & -0.414598+1.47507 i \\
 5.81552\, +1.36514 i & -2.51542-3.45208 i & 3.90801\, +1.03913 i \\
 4.96095\, +0.0703348 i & -5.61733-0.740827 i & 2.45067\, +2.63035 i \\
\end{array}
\right)\nn
\end{footnotesize}
\eea
It is obvious that $\mathcal{T}^a_{A}$ is non-Hermitian. The eigenvalues of it are \bea
\lambda_1=0.7053\, -6.27836 i,\ \lambda_2=0.7053\, +6.27836 i,\ \lambda_3=-0.410601.
\eea
The pseudo R\'enyi entropy may not be  real in this case. e.g., $S^{(2)}(\mathcal{T}^a_{A})=-4.3525+3.14159 i$.\\
\textit{Case III}. We take $x_1=0.932007$, $x_2=0.0679932$. $\eta_{mn}$, $\bar \eta_{mn}$ and $a_{ij}$ are given by
\bea
&&\begin{footnotesize}
\eta_A=\left(
\begin{array}{ccc}
 2241.0 & -1009.73+735.915 i & 286.517\, +572.134 i \\
 -1009.73-735.915 i & 1007.3& 58.5703\, -617.441 i \\
 286.517\, -572.134 i & 58.5703\, +617.441 i & 1399.02 \\
\end{array}
\right)
\end{footnotesize}\nn \\
&& \begin{footnotesize}
\eta_{\bar A}=
\left(
\begin{array}{ccc}
 967.287 & -307.565-126.497 i & -129.349+149.126 i \\
 -307.565+126.497 i & 1336.39 & 209.52\, +1520.81 i \\
 -129.349-149.126 i & 209.52\, -1520.81 i & 2269.12 \\
\end{array}
\right)
\end{footnotesize}\nn\\
&& \begin{footnotesize}
a^1=
\left(
\begin{array}{ccc}
 -12.9325+0.0289028 i & -7.24499+4.48426 i & -14.4313+15.8304 i \\
 -4.09857-26.938 i & -14.2456-2.55161 i & 10.7265\, -5.71364 i \\
 4.30869\, +11.8775 i & -19.1378+9.46391 i & 1.32846\, +4.07899 i \\
\end{array}
\right)
\end{footnotesize}\nn\\
&& \begin{footnotesize}
a^2=
\left(
\begin{array}{ccc}
 -9.48366+25.7059 i & -6.14031+23.5242 i & -13.0021-20.8661 i \\
 -3.87512+5.57805 i & 4.9788\, -6.5475 i & 1.21723\, +7.54634 i \\
 -10.6898+13.5806 i & 11.563\, -1.35289 i & -14.61+21.6139 i \\
\end{array}
\right)
\end{footnotesize}\nn
\eea
The reference state $|\Psi\rangle =\frac{1}{\sqrt{3}}\sum_{k=1}^3|k\rangle_A|k\rangle_{\bar A}$. We have the reduced transition matrix $\mathcal{T}_A:=x_1 T^{a^1}_A+x_2 T^{a^2}_A$
\bea
\begin{footnotesize}
\mathcal{T}_A=\left(
\begin{array}{ccc}
 0.530706\, -0.0443678 i & -0.249067+0.220004 i & 0.16275\, +0.257015 i \\
 -0.0842444-0.152933 i & 0.129589\, +0.0451508 i & 0.0460981\, -0.13474 i \\
 0.213683\, -0.402251 i & 0.0403375\, +0.312084 i & 0.339705\, -0.000782992 i \\
\end{array}
\right)
\end{footnotesize}
\eea
The eigenvalues are
\bea
\lambda_1=0.938253,\ \lambda_2=0.0533309,\ \lambda_3=0.00841637.
\eea
The pseudo R\'enyi entropy is real. The result is shown in Figure \ref{case3plot}.
 \begin{figure}[h]
\centering
\includegraphics[width=9.0cm]{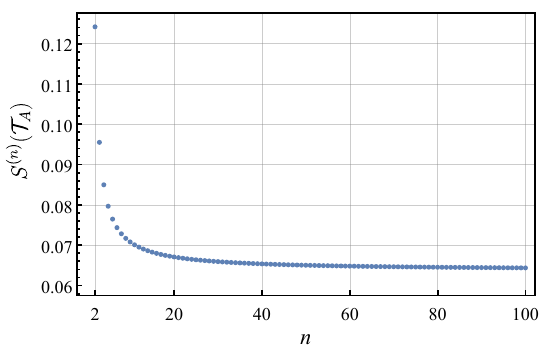}
\caption{The plot of $S^{(n)}(\mathcal{T}_A)$.}\label{case3plot}
\end{figure}
\subsection{Example: $S^{(n)}(\mathcal{T}_A)>0$, $\eta_A$ is not positive definite}\label{nonpositveexample}
Theorem \ref{corol3} only gives a sufficient condition for $
S^{(n)}(\mathcal{T}_A)>0$. In this section, we would like to use a numerical example to show it is not a necessary condition. We will focus on a three-dimensional example. Choosing the matrices
\bea
&&\begin{footnotesize}
\eta_A=\left(
\begin{array}{ccc}
 13.9359\, & -17.8554+8.22163 i & 0.740751\, -0.860494 i \\
 -17.8554-8.22163 i & 11.7561\,  & 3.87722\, +0.527719 i \\
 0.740751\, +0.860494 i & 3.87722\, -0.527719 i & 4.32501\,  \\
\end{array}
\right),
\end{footnotesize}\nn\\
&&
\begin{footnotesize}
\eta_{\bar A}=\left(
\begin{array}{ccc}
 2.68826\, +0. i & -2.76297+6.09204 i & -13.4254-5.89942 i \\
 -2.76297-6.09204 i & 23.4288\, +0. i & 2.24652\, -1.6307 i \\
 -13.4254+5.89942 i & 2.24652\, +1.6307 i & 6.07729\, +0. i \\
\end{array}
\right),
\end{footnotesize}\nn\\
&&\begin{footnotesize}
a=\left(
\begin{array}{ccc}
 2.79442\, +26.2305 i & 14.4042\, -1.54735 i & 1.27623\, +2.29185 i \\
 17.0343\, +21.4595 i & 6.13678\, -4.72818 i & -6.82378+24.1677 i \\
 -6.55401+2.08772 i & -6.0073-29.8274 i & -7.59207-24.0165 i \\
\end{array}
\right).
\end{footnotesize}
\eea
The reference state $|\Psi\rangle =\frac{1}{\sqrt{3}}\sum_{k=1}^3|k\rangle_A|k\rangle_{\bar A}$. The eigenvalues of $\eta_A$ and $\eta_{\bar A}$ are
\bea
&&\eta_A\to\{32.6819,-7.87014,5.20516\},\nn\\
&&\eta_{\bar A}\to\{26.3549, 17.3493, -11.5099\}.
\eea
Thus they are not positive operators. The eigenvalues of $\mathcal{T}_A$ are
\bea
\lambda_1=0.849706,\quad \lambda_2= 0.075147 - 0.106401 i,\quad \lambda_3= 0.075147 + 0.106401 i.
\eea
The pseudo R\'enyi entropy is positive in this example as shown in Figure \ref{nonpos}
\begin{figure}[h]
	\centering
\captionsetup[subfloat]{farskip=5pt,captionskip=1pt}
\subfloat{
			\includegraphics[width =0.45\linewidth]{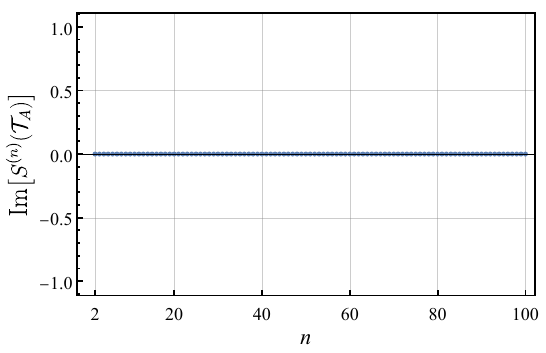}}
\subfloat{
			\includegraphics[width =0.45\linewidth]{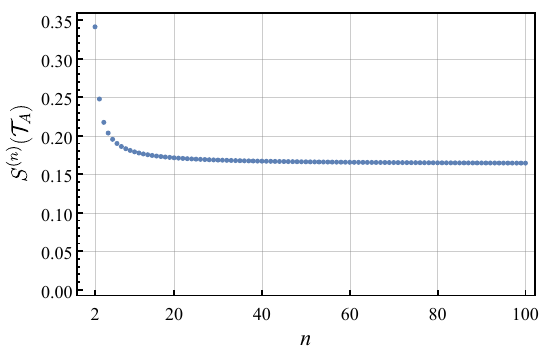}}

\subfloat{
			\includegraphics[width =0.45\linewidth]{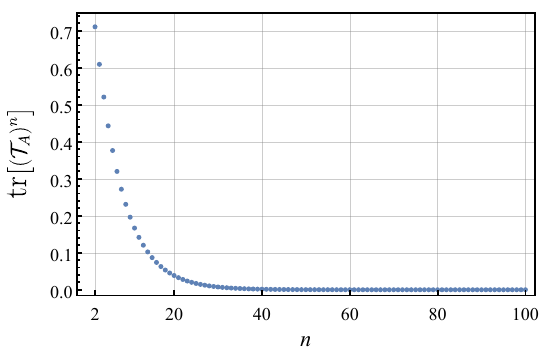}}
\caption{The plot of $S^{(n)}(\mathcal{T}_A)$ and $\tr[(\mathcal{T}_A)^n]$. The upper left plot shows the imaginary part of $S^{(n)}(\mathcal{T}_A)$, which are vanishing. The upper right plot shows $S^{(n)}(\mathcal{T}_A)$. The lower plot shows $\tr[(\mathcal{T}_A)^n]$, which are in the region $(0,1)$. }\label{nonpos}
\end{figure}
\subsection{Example: $S^{(n)}(\mathcal{T}_A)<0$}
In this section we show an example for which $S^{(n)}(\mathcal{T}_A)<0$. Choosing the matrices
\bea
&&\begin{footnotesize}
\eta_A=\left(
\begin{array}{ccc}
 13.9359\, & -17.8554+8.22163 i & 0.740751\, -0.860494 i \\
 -17.8554-8.22163 i & 11.7561\,  & 3.87722\, +0.527719 i \\
 0.740751\, +0.860494 i & 3.87722\, -0.527719 i & 4.32501\,  \\
\end{array}
\right),
\end{footnotesize}\nn\\
&&
\begin{footnotesize}
\eta_{\bar A}=\left(
\begin{array}{ccc}
 2.68826\,  & -2.76297+6.09204 i & -13.4254-5.89942 i \\
 -2.76297-6.09204 i & 23.4288\,  & 2.24652\, -1.6307 i \\
 -13.4254+5.89942 i & 2.24652\, +1.6307 i & 6.07729\, \\
\end{array}
\right),
\end{footnotesize}\nn\\
&&\begin{footnotesize}
a=\left(
\begin{array}{ccc}
 2.79442\, +26.2305 i & 14.4042\, -1.54735 i & 1.27623\, +2.29185 i \\
 17.0343\, +21.4595 i & 6.13678\, -4.72818 i & -6.82378+24.1677 i \\
 -6.55401+2.08772 i & -6.0073-29.8274 i & -7.59207-24.0165 i \\
\end{array}
\right).
\end{footnotesize}
\eea
The reference state $|\Psi\rangle =\frac{1}{\sqrt{3}}\sum_{k=1}^3|k\rangle_A|k\rangle_{\bar A}$. The eigenvalues of $\eta_A$, $\eta_{\bar A}$ and $\mathcal{T}_A$ are given by
\bea
&&\eta_A\to \{85.7965,-45.7377,-0.637431\},\nn\\
&&\eta_{\bar A} \to \{-51.9884,-40.48,28.8633\},\nn\\
&&\mathcal{T}_A\to \{1.37237,-0.368265,-0.00410468\}.
\eea
The pseudo R\'enyi entropy is negative in this example. The result is shown in Figure \ref{negative}.
\begin{figure}[h]
	\centering
\captionsetup[subfloat]{farskip=5pt,captionskip=1pt}
\subfloat{
			\includegraphics[width =0.45\linewidth]{pic2SA.pdf}}
\subfloat{
			\includegraphics[width =0.45\linewidth]{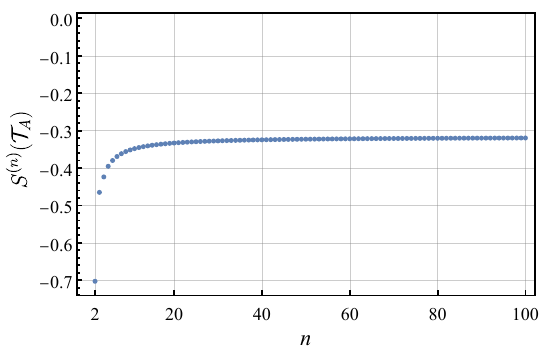}}

\subfloat{
			\includegraphics[width =0.45\linewidth]{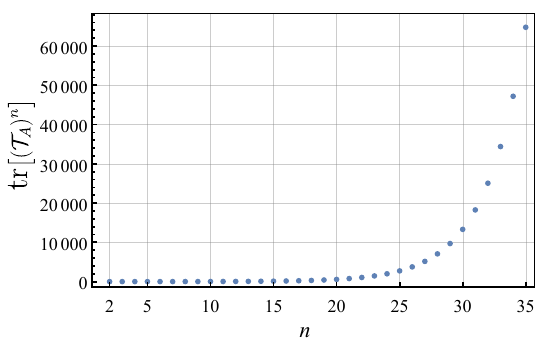}}
\caption{The plot of $S^{(n)}(\mathcal{T}_A)$ and $\tr[(\mathcal{T}_A)^n]$. The upper left plot shows the imaginary part of $S^{(n)}(\mathcal{T}_A)$, which are vanishing. The upper right plot shows $S^{(n)}(\mathcal{T}_A)$, which are negative. The lower plot shows $\tr[(\mathcal{T}_A)^n]$.}\label{negative}
\end{figure}
\subsection{Example with $d=2$}
Let the reference state be $|\Psi\rangle=\frac{1}{\sqrt{2}}(|0\rangle_A |0\rangle_{\bar A}+|1\rangle_A |1\rangle_{\bar A})$. Let the operators $\eta_{A}$ and $\eta_{\bar A}$ be diagonal, $a$ be arbitrary. They are given by
\bea
a=\left(
\begin{array}{cc}
a_{11} & a_{12}\\
a_{21} & a_{22}
\end{array}
\right),
\quad
\eta=\left(
\begin{array}{cc}
\eta_1 & 0\\
0 & \eta_{2}
\end{array}
\right),
\quad
\eta_{\bar A}=\left(
\begin{array}{cc}
\bar \eta_1 & 0\\
0 & \bar \eta_{2}
\end{array}
\right).
\eea
Assume $\eta_A$ and $\eta_{\bar A}$ to be positive, thus $\eta_{1(2)}>0$, $\bar \eta_{1(2)}>0$. One could construct the transition matrix $\mathcal{T}^a$ with these operators. According to Theorem \ref{corol3} we know the eigenvalues of $\mathcal{T}^a_A$ are positive. With some calculations, we have
\bea
\mathcal{T}^a_{ A}=\left(
\begin{array}{cc}
\frac{|a_{11}|^2\eta_1\bar \eta_1+|a_{12}|^2\eta_1\bar \eta_2}{|a_{11}|^2\eta_1\bar \eta_1+|a_{12}|^2\eta_1\bar \eta_2+|a_{21}|^2\bar \eta_1\eta_2+|a_{22}|^2\eta_2\bar \eta_2} & \frac{a_{11} a_{21}^*\eta_2 \bar \eta_1+a_{12} a_{22}^*\eta_2\bar \eta_2}{|a_{11}|^2\eta_1\bar \eta_1+|a_{12}|^2\eta_1\bar \eta_2+|a_{21}|^2\bar \eta_1\eta_2+|a_{22}|^2\eta_2\bar \eta_2}\\
\frac{a_{11}^* a_{21}\eta_1 \bar \eta_1+a_{12}^* a_{22}\eta_1\bar \eta_2}{|a_{11}|^2\eta_1\bar \eta_1+|a_{12}|^2\eta_1\bar \eta_2+|a_{21}|^2\bar \eta_1\eta_2+|a_{22}|^2\eta_2\bar \eta_2} & \frac{|a_{21}|^2\bar \eta_1\eta_2+|a_{22}|^2\eta_2\bar \eta_2}{|a_{11}|^2\eta_1\bar \eta_1+|a_{12}|^2\eta_1\bar \eta_2+|a_{21}|^2\bar \eta_1\eta_2+|a_{22}|^2\eta_2\bar \eta_2}
\end{array}
\right).
\eea
The pseudo R\'enyi entropy of the 2-qubit system is studied in \cite{Nakata:2020luh}. They claim the eigenvalues of $\mathcal{T}^a_A$ are positive if and only if $0\le det[\mathcal{T}^a_A]\le 1/4$. With some calculations, we have
\bea\label{det}
&& \det[\mathcal{T}^a_A]=\frac{|a_{12}a_{21}-a_{11}a_{22}|^2 \eta_1\bar \eta_1\eta_2\bar \eta_2}{(|a_{11}|^2\eta_1\bar \eta_{1}+|a_{12}|^2\eta_1 \bar \eta_2+|a_{21}|^2\bar \eta_1 \eta_2+|a_{22}|^2\eta_2 \bar \eta_2)^2}\nn \\
&&\phantom{det[\mathcal{T}^a_A]}\le \frac{|a_{12}a_{21}-a_{11}a_{22}|^2 \eta_1\bar \eta_1\eta_2\bar \eta_2}{(2|a_{11}||a_{22}|\sqrt{\eta_1\bar \eta_1\eta_2\bar \eta_2}+2|a_{12}||a_{21}|\sqrt{\eta_1\bar \eta_1\eta_2\bar \eta_2})^2}\le \frac{1}{4}.
\eea
The above result can be generalized to arbitrary  positive $\tilde{\eta}_A$ and $\tilde{\eta}_{\bar A}$. Since they are Hermitian operators, there exists unitary operator $U_A$ and $U_{\bar A}$ such that
\bea
\tilde{\eta}_A=U_A \eta_A U_A^\dagger,\quad \tilde{\eta}_{\bar A}=U_{\bar A} \eta_{\bar A} U_{\bar A}^\dagger,
\eea
where $\eta_A$ and $\eta_{\bar A}$ are digonal. The transition matrix $\mathcal{T}^a$ with a given reference state $|\Psi'\rangle$ is given by
\bea
\mathcal{T}^a\propto a |\Psi'\rangle \langle \Psi'| a^\dagger U_A \eta_A U_A^\dagger U_{\bar A} \eta_{\bar A} U_{\bar A}^\dagger.
\eea
Taking partial trace we have
\bea
\mathcal{T}^a_A=tr_{\bar A} \mathcal{T}^a\propto a(tr_{\bar A} U_{\bar A}^\dagger |\Psi'\rangle \langle \Psi' | U_{\bar A}\eta_{\bar A})   a^\dagger U_A \eta_A U_A^\dagger.
\eea
It is always possible to make the operator $tr_{\bar A} U_{\bar A}^\dagger |\Psi'\rangle \langle \Psi' | U_{\bar A}\eta_{\bar A}=tr_{\bar A}  |\Psi\rangle \langle \Psi | \eta_{\bar A}$ by choosing suitable $|\Psi'\rangle$. With this choice one can show $det[\mathcal{T}^a_A]$ is equal to (\ref{det}). Therefore, the transition matrix $\mathcal{T}^a_A$ having positive eigenvalues satisfies that $det[\mathcal{T}^a_A]\le 1/4$, which is consistent with the result in \cite{Nakata:2020luh}.
\section{Details of the example of free scalar with $\theta\ne 0$ }\label{Detailoffreescalar}
In the main text we discuss the transition matrix \eqref{transionunitary}, which can be written as the form
\bea{\label{freescalardetail}}
\mathcal{T}^{\psi|\phi}\propto |\Psi\rangle \langle \Psi| U,
\eea
where $U$ is a unitary operator, it seems the above transition matrix is not like the general form (\ref{Tpseudo}) for the pure pseudo-Hermitian transition matrix. In this section, we will show the transition matrix (\ref{freescalardetail}) can be pseudo-Hermitian for some particular unitary operator $U$ and pure state $|\Psi\rangle$. \\
To satisfy the pseudo Hermitian condition we should require
\bea{}
\eta |\Psi\rangle\langle \Psi| U \eta^{-1} =U^\dagger |\Psi\rangle \langle \Psi|.
\eea
Define the operator $\eta':= U\eta$. This condition is given by
\bea{\label{cond1}}
\eta'|\Psi\rangle =|\Psi\rangle,\quad \langle \Psi| (\eta'^{-1})^\dagger=\langle \Psi|.
\eea
which can be transformed to the operator relation
\bea\label{cond2}
\eta'=\eta'^{-1}+ \alpha P^\Psi_{\bot},
\eea
where $\alpha$ is some constant, $P^\Psi_{\bot}$ satisfies the condition $P^\Psi_{\bot}|\Psi\rangle=0$.  One special case is taking $\alpha=0$. One would have the following relations:
\bea{\label{constraintsetaU}} (\eta')^2=1,\quad  \eta'=(\eta')^{-1}.
\eea
One could check the above two qubits example satisfies the constraints (\ref{cond1}) and (\ref{constraintsetaU}). For the example of free scalar theory with $\theta\ne 0$, one could show the transition matrix is pseudo-Hermitian
by proving the existence of the operator $\eta'$ which satisfies the conditions (\ref{cond1}), (\ref{cond2}), or (\ref{constraintsetaU}).

\section{{Calculation of pseudo R\'enyi entropy by replica method}}\label{app:replica}
 We outline the replica method in QFTs to compute the pseudo R\'enyi entropy in this appendix. In particular, we focus on 2D CFTs as the correlation functions in the replica manifold are easy to obtain by conformal mapping.
%
Let's consider a 2D CFT with  Lagrangian $\mathcal{L}(\phi,\partial\phi) $ dwells on a Euclidean plane $\Sigma_1$ ($ds^2=dwd\bar w$, $(w,\bar{w})=(x+i\tau,x-i\tau)$) and a transition matrix generated by a local operator $\mo(w,\bar w)\equiv e^{\tau H}\mo(x,0)e^{-\tau H}$,
\bea
\mathcal{T}_E^\mo= \frac{\mo(w_1,\bar w_1)|0\rangle \langle 0| \mo^\dagger(w_2,\bar w_2)}{\langle 0| \mo^\dagger(w_2,\bar w_2)\mo(w_1,\bar w_1)|0\rangle},
\eea
where $w_1=x_1-i\tau_1$ and $w_2=x_2+i\tau_2$, $(\tau_1,\tau_2>0)$. The reduced transition matrix of a subsystem $A$, $\mathcal{T}_{E,A}^\mo:=\tr_{\bar A} \mathcal{T}_E^\mo$, can be expressed by path integral with operators inserted at $(w_1,\bar w_1)$ and $(w_2,\bar w_2)$ on the $w$-plane with a cut on $A$
\begin{align}
\langle \phi_{A_-}|\mathcal{T}_{E,A}^\mo|\phi_{A+}\rangle=&\frac{\int^{\phi(x\in A,\tau=0_+)=\phi_{A_+}(x)}_{\phi(x\in A,\tau=0_-)=\phi_{A_-}(x)}[d\phi]\mo^{\dagger}(w_2,\bar w_2)\mo(w_1,\bar w_1)\exp\left\{-\int_{\mathbb{R}^2}\mathcal{L}(\phi,\partial\phi)\right\}}{\int[d\phi]\mo^{\dagger}(w_2,\bar w_2)\mo(w_1,\bar w_1)\exp\left\{-\int_{\mathbb{R}^2}\mathcal{L}(\phi,\partial\phi)\right\}}\nn\\
=&\left(\imineq{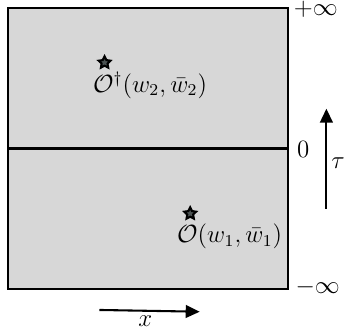}{18}\right)^{-1}\times~~~\imineq{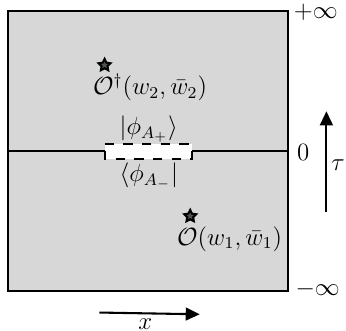}{18}.\label{reduced transition matrix A: graph}
\end{align}
Building on \eqref{reduced transition matrix A: graph}, $\tr[(\mathcal{T}_{E,A}^\mo)^n]$  is  given by a $2n$-point correlation function on a $n$-sheet Riemann surface $\Sigma_n$,
\begin{align}
\tr[(\mathcal{T}_{E,A}^\mo)^n]=&\frac{\mathcal{Z}_n}{\mathcal{Z}_1^n}\cdot\frac{\langle \mo(w_1,\bar w_1) \mo^\dagger(w_2,\bar w_2)...\mo(w_{2n-1},\bar w_{2n-1})\mo^\dagger(w_{2n},\bar w_{2n})\rangle_{\Sigma_n}}{\langle \mo^\dagger(w_2,\bar w_2)\mo(w_1,\bar w_1)\rangle_{\Sigma_1}^n}\nn\\
=&\left(\imineq{TA2.pdf}{18}\right)^{-n}\times~~~\imineq{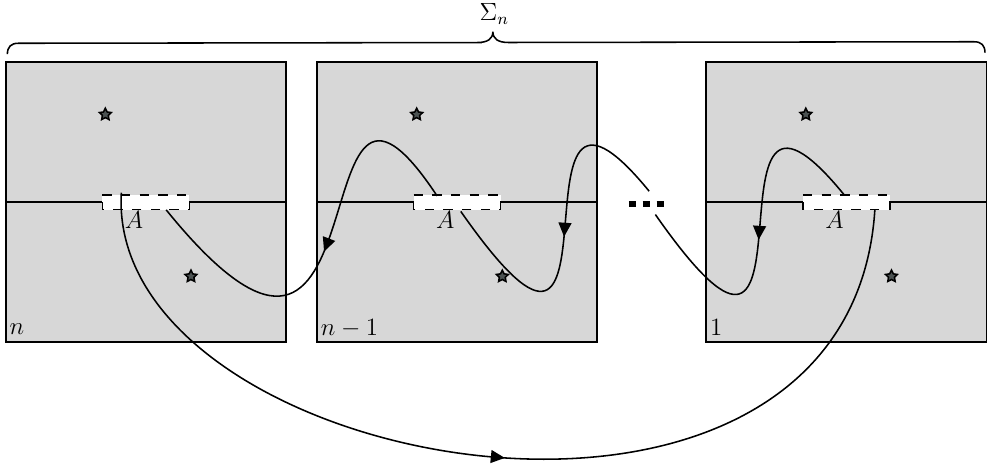}{24}\label{reduced transition matrix An: graph},
\end{align}
where $\mathcal{Z}_1$ and $\mathcal{Z}_n$  are partition functions on $\Sigma_1$ and $\Sigma_n$, respectively, and $\mo(w_{2k-1},\bar w_{2k-1})$ and $\mo^{\dagger}(w_{2k},\bar w_{2k})$  denote the operators inserted at $k$th sheet.
The $n$th pseudo R\'enyi entropy of $\mathcal{T}^{\mo}_{E,A}$  turns out to be
\bea
S^{(n)}(\mathcal{T}_{E,A}^\mo)=S^{(n)}_{A;vac}+\Delta S^{(n)}(\mathcal{T}_{E,A}^\mo),
\eea
where $S^{(n)}_{A;vac}\equiv\frac{1}{1-n}\log\frac{\mathcal{Z}_n}{\mathcal{Z}_1^n}$ is the $n$th R\'enyi entropy of $A$ when the total system is in  the vacuum, and $\Delta S^{(n)}(\mathcal{T}_{E,A}^\mo)$ we refer to as the excess of $n$th pseudo R\'enyi entropy of $A$,
\bea
\Delta S^{(n)}(\mathcal{T}_{E,A}^\mo)=\frac{1}{1-n}\log \frac{\langle \mo(w_1,\bar w_1) \mo^\dagger(w_2,\bar w_2)...\mo(w_{2n-1},\bar w_{2n-1})\mo^\dagger(w_{2n},\bar w_{2n})\rangle_{\Sigma_n}}{\langle\mo(w_1,\bar w_1) \mo^\dagger(w_2,\bar w_2)\rangle_{\Sigma_1}^n}.
\eea
For our purposes, we only focus on the $2$nd pseudo R\'enyi entropy,
\bea
\Delta S^{(2)}(\mathcal{T}_{E,A}^\mo)=-\log \frac{\langle \mo(w_1,\bar w_1) \mo^\dagger(w_2,\bar w_2)\mo(w_3,\bar w_3)\mo^\dagger(w_4,\bar w_4)\rangle_{\Sigma_2}}{\langle \mo (w_1,\bar w_1)\mo^\dagger(w_2,\bar w_2)\rangle_{\Sigma_1}^2}\label{appf:deltaSA2}.
\eea
Meanwhile,  $\mo$ is assumed to be a primary with chiral and anti-chiral conformal dimension $\Delta_\mo$.
 By conformal symmetry, the 2- and 4-point function of $\mo$ on $\Sigma_1$  can be expressed as
\begin{align}
\langle\mo(z_1,\bar{z}_1)\mo^{\dagger}(z_2,\bar{z}_2)\rangle_{\Sigma_1}&=\frac{c_{12}}{|z_{12}|^{4\Delta_\mo}},\label{2pt function on sigma1}\\
\langle\mo(z_1,\bar{z}_1)\mo^{\dagger}(z_2,\bar{z}_2)\mo(z_3,\bar{z}_3)\mo^{\dagger}(z_4,\bar{z}_4)\rangle_{\Sigma_1}&=|z_{13}z_{24}|^{-4\Delta_\mo}G(\eta,\bar{\eta}),\label{4ptandG}
\end{align}
respectively, where $\eta:=\frac{z_{12}z_{34}}{z_{13}z_{24}}$ and $\bar \eta:=\frac{\bar z_{12}\bar z_{34}}{\bar z_{13}\bar z_{24}}$ are the cross ratios. Since there are conformal mappings
\bea\label{coordinatetrans}
z=
\begin{cases}
 w^{1/n},&\quad A=[0,\infty),\nn\\
 \left( \frac{w+L}{w-L} \right)^{1/n},&\quad A=[-L,L],
\end{cases}
\eea
from $\Sigma_n$ to $\Sigma_1$,   the 4-point function on $\Sigma_2$ is obtained by applying the above conformal mappings with $n=2$
\begin{equation}
\langle\mo(w_1,\bar{w}_1)\mo^{\dagger}(w_2,\bar{w}_2)\mo(w_3,\bar{w}_3)\mo^{\dagger}(w_4,\bar{w}_4)\rangle_{\Sigma_2}=
\begin{cases}
\big|\frac{64L^2z_1^2z_2^2}{(z_1^2-1)^2(z_2^2-1)^2}\big|^{-4\Delta_\mo}G(\eta,\bar{\eta}),~&A=[-L,L],\\
\\
\big|16z_1^2z_2^2\big|^{-4\Delta_\mo}G(\eta,\bar{\eta}),~&A=[0,+\infty)\label{4pt on sigma2}.
\end{cases}
\end{equation}
Substituting \eqref{2pt function on sigma1} and \eqref{4pt on sigma2} into \eqref{appf:deltaSA2} and after some algebra,
we obtain
\bea\label{S2formula}
\Delta S^{(2)}(\mathcal{T}_{E,A}^\mo)=\log \frac{c_{12}^2}{|\eta(1-\eta)|^{4\Delta_\mo} G(\eta,\bar \eta)},
\eea
which only depends on the cross ratios $\eta$ and $\bar \eta$. The 2nd pseudo R\'enyi entropy with regard to the real-time dependent transition matrix can be obtained by applying the analytic continuation to $\tau_1$ and $\tau_2$ in the above result. When $\tau_1\to\e+ it$ and $\tau_2\to\epsilon -i t$, we meet the case studied in \cite{Guo:2022sfl}. As we mentioned in the previous section, we would like to focus on the case of $\mathcal{T}^{\mathcal{O}}=\frac{\mo(x,t)|0\rangle\langle0|\mo(-x,-t)}{\langle0|\mo(-x,-t)\mo(x,t)|0\rangle}$. Thus we have the analytic continuation $\tau_1=\tau_2\to\epsilon -i t$. An infinitesimally small regularization parameter $\e$ is introduced to suppress the high energy modes \cite{Calabrese:2005in}.
\bibliographystyle{JHEP}
\bibliography{PE-bib}{}
\end{document}